\documentclass{llncs}

\newcommand{\ignore}[1]{}

\usepackage[algo2e,english,boxruled,noline,noend,linesnumbered]{algorithm2e}
\usepackage[english]{babel}
\usepackage{amssymb,latexsym,amsmath,stmaryrd}
\usepackage{multirow}
\usepackage{threeparttable}

\usepackage[all]{xy}


\newcommand{\cO}{{\cal O}} 

\setlength{\algomargin}{1.5em}

\begin{document}

\title{Incremental Satisfiability and Implication
for UTVPI Constraints}

\author{Andreas Schutt \and Peter J. Stuckey}

\institute{NICTA Victoria Laboratory, Department of Computer Science \& Software
Engineering, The~University~of~Melbourne, Australia\\
        \email{\{aschutt,pjs\}@csse.unimelb.edu.au}}

\maketitle

\begin{abstract}
Unit two-variable-per-inequality (UTVPI) constraints form one of the largest
class of integer constraints which are polynomial time solvable (unless P=NP).
There is considerable interest in their use for constraint solving, abstract
interpretation, spatial databases, and theorem proving.
In this paper we develop a new incremental algorithm for UTVPI constraint
satisfaction and implication checking that requires $\cO(m + n \log n+p)$ time
and $\cO(n+m+p)$ space to incrementally check satisfiability of $m$ UTVPI
constraints on $n$ variables and check implication of $p$ UTVPI constraints.
\end{abstract}

\section{Introduction}

The unit two-variable-per-inequality (UTVPI) constraints form
one of the largest class of integer constraints which are polynomial time
solvable (unless P=NP).
There is considerable interest in their use for constraint
solving~\cite{Jaffar:Maher:Stuckey:Yap:94,Harvey:Stuckey:97},
abstract interpretation~\cite{octagon}, spatial databases~\cite{adc2000}
and theorem proving~\cite{Lahiri:Musuvathi:05}.
In this paper we develop new incremental algorithms for
UTVPI constraint satisfaction and implication.

A UTVPI constraint has the form $ax+by\leq d$ where $x$, $y$ are integer
variables, $d\in\mathbb{Z}$ and $a, b\in \{-1, 0, 1\}$.  For example
$x + y \leq 2$, $x - y \leq -1$, $0 \leq -1$ and $x \leq 2$ are UTVPI
constraints.
UTVPI constraint solving is based on transitive closure:
A constraint $ax - y \leq d_1$ and $y + bz \leq d_2$ implies
the constraint $ax + bz \leq d_1 + d_2$.
We can determine all the UTVPI consequences of a set of UTVPI constraints
by transitive closure, but we need to \emph{tighten} some constraints.
The transitive closure procedure can generate constraints
of the form $x + x \leq d$ and $-x - x \leq d$, which need to be
tightened to
$x \leq \left\lfloor \frac{d}{2} \right\rfloor$ and
$-x \leq \left\lfloor \frac{d}{2}
\right\rfloor$
respectively.

Jaffar~et~al.~\cite{Jaffar:Maher:Stuckey:Yap:94} and
Harvey~et~al.~\cite{Harvey:Stuckey:97}
present incremental consistency
checking algorithms for adding a UTVPI constraint $c$ to a
set $\phi$ of UTVPI constraints.
They are based on maintaining
the transitive and tight closure of the set of  UTVPI
constraints $\phi$ involving $n$ variables.
Both algorithms require $\cO(n^2)$ time
and $\cO(n^2)$ space for an incremental satisfaction check.
Both algorithms can also be used to incrementally
check implication of UTVPI constraints by $\phi \cup \{c\}$.
These algorithms require $\cO(n^2 + p)$ time
and $\cO(n^2 + p)$ space for an incremental implication checking, where $p$ is
the number of constraints we need to check for implication.
In order to (non-incrementally)
check satisfiability of $m$ UTVPI constraints on $n$ variables
these approaches require $\cO(n^2m$) time,
and to check implication they require $\cO(n^2m+p)$ time.

An improvement on the complexity of (non-incremental) satisfiability
for UTVPI constraints was devised by  Lahiri and
Musuvathi~\cite{Lahiri:Musuvathi:05}.
They define a non-incremental satisfiability algorithm requiring
$\cO(nm)$ time and  $\cO(n+m)$ space.
The key behind their approach is to map UTVPI constraints to
difference constraints (also called separation theory constraints)
of the form $x - y \leq d$, where $x$ and $y$ are integer variables
and $d\in\mathbb{Z}$.

The difference constraints are a well studied class of con\-straints because
of their connection to shortest path problems.
We can consider the constraint  $x - y \leq d$
as a directed edge $x \rightarrow y$ with weight $d$.
Satisfiability of difference constraints corresponds to the problem of
negative weight cycle detection, and implication of difference constraints
corresponds to finding shortest paths (see e.g.~\cite{Cotton:Maler:06}
for details).

The mapping of UTVPI to difference constraints by Lahiri and
Musuvathi~\cite{Lahiri:Musuvathi:05} is a relaxation of
the problem.  The relaxed problem is solved by a negative (weight) cycle
detection
algorithm but it guarantees only the satisfiability in $\mathbb{Q}$ for the
UTVPI problem.
In order to check satisfiability in $\mathbb{Z}$ they need to construct
an auxiliary graph and check for certain paths in this graph.

In this paper we first extend Lahiri and
Musuvathi's algorithm~\cite{Lahiri:Musuvathi:05}
to check satisfaction incrementally
in $\cO(n \log n + m)$ and $\cO(n+m)$ space.
Then we
show how to build an incremental satisfiability and implication algorithm
using the relaxation of Lahiri and
Musuvathi and incremental approaches to
implication for difference constraints of Cotton and
Maler~\cite{Cotton:Maler:06},
which can incrementally check implication in
$\cO(n \log n+m+p)$ time and $\cO(n+m+p)$ space.

\section{Preliminaries}

In this section we given notation and
preliminary concepts.

A \emph{weighted directed graph} $G = (V,E)$ is made up vertices $V$
and a set $E$ of weighted directed edges $(u, v, d)$ from vertex $u
\in V$ to vertex $v \in V$ with weight $d$.
We also use the notation $u  \stackrel{d}{\to} v$ to denote the
edge $(u,v,d)$.

A \emph{path} $P$ from $v_0$ to $v_k$ in graph $G$,
denoted $v_0 \rightsquigarrow v_k$,
 is a sequence of edges $e_1,
\ldots, e_k$ where $e_i = (v_{i-1}, v_i, d_i) \in E$.
A \emph{simple path} $P$ is a path where $v_i \neq v_j, 0 \leq i < j \leq k$.

A (simple) \emph{cycle} $P$ is a path $P$ where $v_0 = v_k$ and
$v_i \neq v_j, 0 \leq i < j \wedge k \wedge (i \neq 0 \vee j \neq k)$.

The \emph{path weight} of a path $P$,
denoted $w(p)$ is $\Sigma_{i=1}^k d_i$.

Let $G$ be a graph without negative weight cycles,
that is without a cycle $P$ where $w(P) < 0$.
Then we can define the \emph{shortest path} from $v_0$ to $v_k$,
which we denote by $SP(v_0,v_k)$, as
the (simple) path $P$ from $v_0$ to $v_k$ such that $w(P)$ is minimized.

Let $wSP(x,y) = w(SP(x,y))$ or $+\infty$ if no path exists from $x$ to $y$.

Given a graph $G$ and vertex $x$ define the
functions $\delta_x^\leftarrow, \delta_x^\rightarrow: V\rightarrow
\mathbb{R}$ as
\[
\delta_x^\leftarrow (y) = wSP(y,x)\qquad
 \text{and}\qquad
\delta_x^\rightarrow (y) = wSP(x,y)\enspace.
\]

Let $G$ be a graph without negative weight cycles.
Then $\pi$ is a \emph{valid potential function} for $G$
if $\pi(u) + d - \pi(v)\ge 0$ for every edge $(u,v,d)$ in $G$.

There are many algorithms (see e.g.~\cite{ncd})
for detecting negative weight cycles
in a weighted directed graph, which either detect
a cycle or determine a  valid potential function for the graph.

Given a valid potential function $\pi$ for graph $G= (V,E)$ we can define
the
\emph{reduced cost graph} $rc(G)$ as $(V, \{ (x,y,\pi(x) + d -
\pi(y) ~|~ (x,y,d) \in E \})$.
All weights in the reduced cost graph are non-negative and
we can recover the original path length $w(P)$ for path $P$ from $x$ to $y$
from paths in the reduced cost graph since
$w(P) = w + \pi(y) - \pi(x)$ where $w$ is the weight of
the corresponding path in the reduced cost graph.

Since edges in the reduced cost graph are non-negative
we can use Dijkstra's algorithm to
calculate the shortest paths in the reduced cost graph in
time $\cO(n\log n + m)$ instead of $\cO(nm)$.

\subsection{Difference constraints}

Difference constraints have the form $x - y \leq d$ where
$x$ and $y$ are integer variables and $d \in \mathbb{Z}$.
We can map difference constraints to a weighted directed graph.

\begin{definition}
   Let $C$ be a set of difference constraints and let $G=(V,E)$ be the graph
   comprised of one weighted edge $x \stackrel{d}{\to} y$ for every
   constraint $x-y\leq d$ in $C$.
   We call $G$ the \emph{constraint graph} of $C$.
\end{definition}

The following well-known result characterizes how the constraint
graph can be used for satisfiability and implication checking
of difference constraints.

\begin{theorem}[\cite{Cotton:Maler:06}]
Let $C$ be a set of difference constraints and $G$ its corresponding
graph. $C$ is satisfiable iff $G$ has no negative weight cycles,
and if $C$ is satisfiable then $C \models x - y \leq d$
iff $wSP(x,y) \leq d$.
\end{theorem}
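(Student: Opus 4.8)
The statement bundles two claims, so the plan is to prove them separately, with the satisfiability characterization doing most of the work and then being reused for the implication part. Throughout I use that the edge $x \stackrel{d}{\to} y$ records the constraint $x - y \leq d$, so that the weight of a path mirrors the sum of the right-hand sides of the constraints along it.

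For satisfiability, first the easy direction: if $C$ has a satisfying assignment $\nu$ and $P = v_0 \to \cdots \to v_k = v_0$ is any cycle with edge weights $d_1,\ldots,d_k$, then summing the inequalities $\nu(v_{i-1}) - \nu(v_i) \leq d_i$ contributed by the edges makes the left-hand side telescope to $\nu(v_0) - \nu(v_k) = 0$, yielding $0 \leq w(P)$; hence no cycle is negative. Conversely, if $G$ has no negative weight cycle then, by the negative-cycle-detection results quoted in the preliminaries, $G$ admits a valid potential function $\pi$, so $\pi(y) - \pi(x) \leq d$ for every edge $x \stackrel{d}{\to} y$. Setting $\nu = -\pi$ turns this into $\nu(x) - \nu(y) \leq d$, i.e.\ $\nu$ satisfies every constraint of $C$. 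Since the edge weights are integral, $\pi$ may be taken integral, so the assignment is integral as required. The one point to get right here is the sign: the potential inequality and the constraint point in opposite directions, which is why $\nu = -\pi$ rather than $\nu = \pi$ is the correct choice.

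For implication, assume $C$ is satisfiable. The direction $wSP(x,y) \leq d \Rightarrow C \models x - y \leq d$ is again a telescoping argument: along a shortest path $x \to \cdots \to y$ the edge constraints sum to $x - y \leq wSP(x,y) \leq d$, and since each edge constraint lies in $C$ this inequality is entailed. For the converse I would argue contrapositively and reduce to the satisfiability claim already proved: assuming $wSP(x,y) > d$, augment $G$ with the single edge $y \stackrel{-(d+1)}{\to} x$, which encodes $x - y \geq d+1$. Any cycle created by this edge must traverse a path from $x$ to $y$ in $G$ followed by the new edge, so its weight is at least $wSP(x,y) - (d+1) \geq 0$ (using integrality, $wSP(x,y) > d$ means $wSP(x,y) \geq d+1$); since $G$ itself had no negative cycle, neither does the augmentation. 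By the satisfiability claim the augmented system is satisfiable, and any satisfying assignment forces $x - y \geq d+1 > d$, witnessing $C \not\models x - y \leq d$.

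I expect the main obstacle to be this converse implication direction, for two reasons. First, the naive construction --- take $\nu(v) = -wSP(x,v)$ so that $\nu(x) - \nu(y) = wSP(x,y) > d$ --- breaks for vertices not reachable from $x$, where the distance is $+\infty$; the augmentation-plus-satisfiability route is what I would use precisely to avoid this bookkeeping, since an unreachable $y$ simply means the new edge lies on no cycle at all. Second, one must not lose the integer domain: the step $wSP(x,y) > d \Rightarrow wSP(x,y) \geq d+1$ is what lets the extra constraint $x - y \geq d+1$ remain consistent, and it is worth checking that integrality of the constraint graph guarantees an integral witness throughout.
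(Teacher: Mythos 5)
Your proof is correct, but note that the paper itself offers no proof of this theorem: it is imported verbatim from Cotton and Maler as a known preliminary about difference constraints, so there is no in-paper argument to compare against. Your two-part argument is the standard one and is sound. The satisfiability half (telescoping around a cycle for one direction, negating a valid potential function for the other) uses exactly the machinery the paper sets up in its preliminaries, where the existence of a valid potential function in the absence of negative cycles is taken as given; your remark that the potential can be chosen integral (e.g.\ Bellman--Ford distances from an auxiliary source) is the right observation to keep the witness in $\mathbb{Z}$, even though for difference constraints $\mathbb{Q}$- and $\mathbb{Z}$-satisfiability coincide. For the implication half, your contrapositive via augmenting $G$ with the edge $y \stackrel{-(d+1)}{\to} x$ and reusing the satisfiability characterization is a genuinely clean route: it sidesteps the usual bookkeeping of the more common direct construction $\nu(v) = -wSP(x,v)$, which must special-case vertices unreachable from $x$, and your two flagged subtleties (unreachability making the new edge cycle-free, and integrality giving $wSP(x,y) \geq d+1$ so the augmented graph stays free of negative cycles) are exactly the points that need care. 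The only stylistic remark is that a simple cycle through the added edge uses it exactly once, which is what licenses decomposing it as the new edge plus a path $x \rightsquigarrow y$ in $G$; you use this implicitly and it is true, but it deserves a sentence.
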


\subsection{UTVPI constraints}

A UTVPI constraint is of the form $ax + by \leq d$, where
$x$ and $y$ are integer variables,
$a, b \in \{-1,0,1\}$, $c\in\{-1,1\}$ and $d \in \mathbb{Z}$.

\begin{definition}
The \emph{transitive closure} $TC(\phi)$ of a set of UT\-VPI constraints
$\phi$ is defined as the smallest set $S$ containing $\phi$
such that
$$
ax - cy \leq d_1 \in S \wedge cy + bz \leq d_2 \in S
~~\Rightarrow~~ ax + bz \leq d_1 + d_2 \in S
$$
The \emph{tightened closure} $TI(\phi)$ of a set of UTVPI constraints
$\phi$ is defined as the smallest set $S$ containing $\phi$
such that
$$
ax + ax \leq d \in S ~~\Rightarrow~~ ax \leq \left\lfloor \frac{d}{2} \right
\rfloor \in S,
~~~~ a \in \{-1,1\}
$$
The \emph{tightened transitive closure} $TTC(\phi)$ of $\phi$
is the smallest set containing $\phi$ that satisfies both conditions.
\end{definition}

The fundamental result for UTVPI constraints solving is:

\begin{theorem}[\cite{Jaffar:Maher:Stuckey:Yap:94}]
\label{theorem:satis:utvpi}
Let $\phi$ be a set of UTVPI constraints. Then $\phi$
is unsatisfiable iff exists $0 \leq d \in TTC(\phi)$
where $d < 0$.
\end{theorem}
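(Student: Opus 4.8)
Reading the statement literally is awkward; I take ``$0\le d\in TTC(\phi)$ where $d<0$'' to mean that $TTC(\phi)$ contains the degenerate UTVPI constraint $0x+0y\le d$ with $d<0$, i.e.\ a false constant inequality, so the claim is that $\phi$ is unsatisfiable over $\mathbb{Z}$ exactly when such a false constant constraint is derivable.

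For the ``if'' direction (soundness) the plan is to observe that every constraint in $TTC(\phi)$ is an integer consequence of $\phi$. I would check that both closure operations preserve all integer solutions: summing $ax-cy\le d_1$ and $cy+bz\le d_2$ gives $ax+bz\le d_1+d_2$, which is sound over $\mathbb{Q}$ and hence over $\mathbb{Z}$; and from $2ax\le d$ together with $ax\in\mathbb{Z}$ we may conclude $ax\le\lfloor d/2\rfloor$, which is sound over $\mathbb{Z}$. A routine induction on the derivation length then shows that any integer solution of $\phi$ satisfies all of $TTC(\phi)$. Consequently, if $TTC(\phi)$ contains $0\le d$ with $d<0$ (which no assignment satisfies), $\phi$ can have no integer solution.

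For the ``only if'' direction I would prove the contrapositive by constructing an integer model whenever $TTC(\phi)$ contains no false constant constraint, arguing by induction on the number $n$ of variables. Pick a variable $x$ and let $\phi'$ be the set of $x$-free constraints of $TTC(\phi)$. Since $\phi'\subseteq TTC(\phi)$ and $TTC(\phi)$ is closed under both operations, $TTC(\phi')\subseteq TTC(\phi)$, so $\phi'$ inherits the absence of a false constant constraint; by induction it has an integer model $\sigma$, which then satisfies every $x$-free constraint of $TTC(\phi)$. The real work is the extension step: collect all constraints of $TTC(\phi)$ mentioning $x$, evaluate the other variable under $\sigma$, and read each as an integer lower or upper bound on $x$, so that $x$ must land in an interval $[\max(\text{lowers}),\min(\text{uppers})]$ with integer endpoints. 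To show this interval is nonempty I would take an arbitrary lower-bound constraint $L$ (containing $-x$) and upper-bound constraint $U$ (containing $+x$) and eliminate $x$ between them by transitive closure; the resulting $x$-free constraint lies in $TTC(\phi)$ and is therefore satisfied by $\sigma$, which is precisely the inequality asserting that the lower bound from $L$ does not exceed the upper bound from $U$. As all endpoints are integers, a nonempty interval contains an integer, which we assign to $x$.

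The delicate case, and the one I expect to be the main obstacle, is when $L$ and $U$ involve the same other variable $y$ with the same sign, so that eliminating $x$ yields $2ay\le d_1+d_2$. Over $\mathbb{Q}$ this only pins the lower bound within $\tfrac12$ of the upper bound and could leave no integer room for $x$; here the tightened closure is essential, since $TTC(\phi)$ contains the tightened $ay\le\lfloor(d_1+d_2)/2\rfloor$, and $a\sigma(y)\le\lfloor(d_1+d_2)/2\rfloor$ is exactly what forces the integer lower and upper bounds on $x$ to cross correctly. I would then dispatch the remaining combinations (two unary bounds, one unary and one binary, two binary bounds on distinct variables, and the same variable with opposite signs) by the same eliminate-then-evaluate argument; each reduces to an $x$-free closure constraint already satisfied by $\sigma$ and is routine. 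Assigning the chosen integer to $x$ extends $\sigma$ to a model of every constraint in $TTC(\phi)\supseteq\phi$, completing the induction and hence the theorem.
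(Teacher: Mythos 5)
The paper offers no proof of this theorem to compare against: it is imported verbatim, with citation, from Jaffar et al., so your proposal must stand on its own. Its overall architecture is sound and is the standard one for such results: the soundness direction by induction on derivations (transitive closure is valid over $\mathbb{Q}$, tightening is valid because $ax$ is an integer), and the completeness direction by induction on the number of variables, using the facts that the $x$-free part $\phi'$ of $TTC(\phi)$ satisfies $TTC(\phi')\subseteq TTC(\phi)$ and that Fourier--Motzkin elimination of $x$ between a lower and an upper constraint yields an $x$-free member of $TTC(\phi)$ already satisfied by the inductively obtained model $\sigma$.

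There is, however, one concrete problem: you have located the role of tightening in the wrong case, and the case where it is genuinely indispensable is missing from your analysis. Your ``delicate case'' (lower and upper constraints on $x$ involving the same variable $y$ with the same sign) does not in fact need tightening: the untightened constraint $ay+ay\le d_1+d_2$ lies in $TC(\phi)$, is $x$-free, and since $\sigma(y)$ is already an \emph{integer}, $2a\sigma(y)\le d_1+d_2$ forces the two integer bounds on $x$ to cross; invoking the tightened companion there is valid but redundant. What your case list (``two unary, unary--binary, binary on distinct variables, same variable with opposite signs'') never treats are the self-constraints $x+x\le D$ and $-x-x\le D$ of $TTC(\phi)$ on the variable $x$ currently being eliminated. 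These do \emph{not} read as integer bounds --- they give half-integral bounds $D/2$ --- so your claim that the interval has integer endpoints is unjustified as written, and this is exactly where a tightening-free version of the argument collapses. Concretely, for $\phi'$ of Example~\ref{ex:utvpi} the transitive closure contains $z+z\le 7$ and $-z-z\le -7$, whose untightened bounds pin $z=7/2$: a nonempty rational interval containing no integer, even though $TC(\phi')$ has no false constant constraint. The repair is short and uses machinery you already have: replace every self-constraint $ax+ax\le D$ by its tightened companion $ax\le\left\lfloor D/2\right\rfloor$, which belongs to $TTC(\phi)$ and implies it over $\mathbb{Z}$; then all endpoints are integers and your pairwise crossing argument covers every remaining combination. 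Without that step the proof omits precisely the point that distinguishes $\mathbb{Z}$-satisfiability from $\mathbb{Q}$-satisfiability, which is the entire content of the theorem.
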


We can extend this for implication checking straightforwardly:
\begin{corollary}\label{cor}
Let $\phi$ be a satisfiable set of UTVPI constraints.
Then $\phi \models ax + by \leq d$ iff
$ax + by \leq d' \in TTC(\phi)$ with $d' \leq d$
or
$\{ax \leq d_1, by \leq d_2\} \subseteq TTC(\phi)$ with $d_1 + d_2 \leq d$.
\end{corollary}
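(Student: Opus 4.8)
The plan is to derive the corollary directly from Theorem~\ref{theorem:satis:utvpi}, reducing implication checking to a satisfiability question. The standard technique is to observe that $\phi \models ax+by\leq d$ holds if and only if $\phi \cup \{-ax-by\leq -d-1\}$ is unsatisfiable (over the integers), since negating $ax+by\leq d$ gives $ax+by\geq d+1$, i.e.\ $-ax-by\leq -d-1$. First I would set $\psi = \phi\cup\{-ax-by\leq -d-1\}$ and apply Theorem~\ref{theorem:satis:utvpi} to $\psi$: the implication holds iff some constant constraint $0\leq d''$ with $d''<0$ lies in $TTC(\psi)$. The task then becomes analyzing how the newly added constraint $-ax-by\leq -d-1$ can participate in producing such a negative constant in the tightened transitive closure.

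Next I would trace exactly which derivations in $TTC(\psi)$ can involve the new constraint. Since a negative constant must be produced and $\phi$ alone is satisfiable (so $TTC(\phi)$ contains no negative constant), every offending derivation must use $-ax-by\leq -d-1$ at least once. The transitive-closure rule composes $-ax-by\leq -d-1$ with constraints already implied by $\phi$ to cancel the variables $x$ and $y$. The two cases in the corollary correspond precisely to the two ways this cancellation can happen: either $\phi$ implies a single two-variable constraint $ax+by\leq d'$ that combines with the negated constraint in one transitive step to yield $0\leq d'-d-1$, forcing $d'\leq d$; or $\phi$ implies two single-variable constraints $ax\leq d_1$ and $by\leq d_2$ that each cancel one variable, yielding $0 \leq d_1+d_2-d-1$ and forcing $d_1+d_2\leq d$. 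I would argue that by the structure of UTVPI composition these are the only shapes a cancelling derivation can take, after accounting for tightening.

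The main obstacle I expect is the careful case analysis of how tightening interacts with the transitive steps, and confirming that no other combination produces a negative constant. In particular one must handle the possibility that $a$ or $b$ is zero (so that $ax+by\leq d$ is itself effectively a one- or zero-variable constraint) and the possibility that the relevant single-variable bounds arise only after a tightening step $ax+ax\leq e \Rightarrow ax\leq\lfloor e/2\rfloor$. I would verify that whenever the tightening rule is needed to obtain $ax\leq d_1$ or $by\leq d_2$, the resulting bounds still satisfy $d_1+d_2\leq d$, and conversely that the floor operation does not let an extra unit slip through that would weaken the stated inequalities. The forward direction (if one of the two TTC conditions holds then $\phi\models ax+by\leq d$) is comparatively routine: it follows because the members of $TTC(\phi)$ are logical consequences of $\phi$ and summing the one- or two-variable bounds gives $ax+by\leq d$ by transitivity and monotonicity. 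Closing the argument amounts to matching each satisfiable-or-not case of Theorem~\ref{theorem:satis:utvpi} applied to $\psi$ with the corresponding disjunct of the corollary.
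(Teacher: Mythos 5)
The paper gives no actual proof of this corollary---it is stated as following ``straightforwardly'' from Theorem~\ref{theorem:satis:utvpi}---so there is no official argument to compare against; your proposal must be judged on its own. Your overall strategy is the natural one: over $\mathbb{Z}$ the negation of $ax+by\leq d$ is $-ax-by\leq -d-1$, so $\phi\models ax+by\leq d$ iff $\psi=\phi\cup\{-ax-by\leq -d-1\}$ is unsatisfiable, and Theorem~\ref{theorem:satis:utvpi} then characterizes this by a negative constant $0\leq d''$, $d''<0$, in $TTC(\psi)$. The backward direction of the corollary, as you say, is routine by soundness of transitive closure and tightening over $\mathbb{Z}$.

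The forward direction, however, contains a genuine gap: the assertion that ``by the structure of UTVPI composition these are the only shapes a cancelling derivation can take'' is not a bookkeeping step---it is the entire content of the corollary, and your proposal names it rather than proves it. Three concrete difficulties must be overcome. First, your phrasing has the negated constraint composed ``with constraints already implied by $\phi$,'' but inside $TTC(\psi)$ the partner of a transitive step may itself have been derived using the negated constraint, so one needs a normalization or reordering argument (induction on derivations, or a multiset argument on left-hand sides) before the two-case split even makes sense. Second, the negated constraint can be used \emph{more than once}, which your ``combines in one transitive step'' framing does not cover: if $ax+ax\leq w_1$ and $by+by\leq w_2$ lie in $TC(\phi)$ with $w_1+w_2\leq 2d+1$, then composing $-ax-by\leq -d-1$ \emph{twice} against them yields $0\leq w_1+w_2-2d-2<0$ with no tightening at all; to land in your second case you must pass to the tightened bounds $d_1=\lfloor w_1/2\rfloor$, $d_2=\lfloor w_2/2\rfloor$ and verify the parity fact that $w_1+w_2\leq 2d+1$ forces $\lfloor w_1/2\rfloor+\lfloor w_2/2\rfloor\leq d$---exactly the floor arithmetic you flag as an obstacle but never carry out. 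Third, tightening can fire mid-derivation on constraints that already involve the negated constraint (e.g.\ on $-ax-ax\leq w$ obtained by composing it with a $\phi$-consequence), and these derivations must also be folded back into case 2. All of this can be made to work---the statement is true---but a cleaner route, consistent with the paper's own toolkit, is to argue on the constraint graph: unsatisfiability of $\psi$ gives, via Lemmas~\ref{lemma:UnsatisInQ} and~\ref{lemma:UtvpiSatZ}, a negative cycle or a zero-weight cycle with odd $wSP(u,-u)$ in $G_\psi$; a simple cycle uses each new edge at most once (so at most twice counting the mirror edge), and Lemma~\ref{lemma:graph} converts the $G_\phi$-segments of the cycle back into members of $TC(\phi)$, which produces precisely your two cases without reasoning about arbitrary derivation trees.
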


\begin{example}\label{ex:utvpi0}
Consider the UTVPI constraints $\phi \equiv$ \{$x - y \leq 2$, $x + y \leq -1$,
$-x - z \leq -4$\},
Then $TC(\phi)$ includes in addition $\{x + x \leq 1$,
$-y - z \leq -2$, $y - z \leq -5, -z -z \leq -7, x - z \leq -3\}$.
And $TI(TC(\phi))$ includes in addition $\{ x \leq 0, -z \leq -4\}$
and $TTC(\phi) = TI(TC(\phi))$ in this case.
The constraint $-z \leq -3$ is implied by
$\phi$ as is $y - z \leq 0$.
\end{example}

\section{Lahiri and Musuvathi's approach}

Lahiri and Musuvathi map UTVPI constraints $\phi$ to
difference constraints or equivalently a weighted directed graph $G_\phi$,
and they use graph algorithms
to detect satisfiability.

We denote the constraint graph arising from
$\phi$ as $G_\phi=(V,E)$.
The graph $G$ contains two vertices
$x^+$ and $x^-$ for every variable $x$.
These variables are used to convert UTVPI constraints into
difference constraints. The vertex
$x^+$ represents $+x$ and $x^-$ represents $-x$.

\begin{table}[bt]
   \centering \caption{Transformation from UTVPI constraint $c$
to associated difference
   constraints $D(c)$ to edges in the constraint graph $E(c)$.}
\label{tab:TraUtvpiAdcCg}
   \begin{tabular}{|c|c|c|} \hline UTVPI $c$ & Diff. Constr.
   $D(c)$ &
   Edges $E(c)$ \\ \hline
   \multirow{2}{*}{$x - y \le d$} & $x^+ - y^+ \le d$ &
   $y^+ \stackrel{d}{\to} x^+$\\
 & $y^- - x^- \le d$ &
   $x^- \stackrel{d}{\to} y^-$\\
\hline
   \multirow{2}{*}{$x + y \le d$} &
   $x^+ - y^- \le d$ & $y^- \stackrel{d}{\to} x^+$\\
 &
   $y^+ - x^- \le d$ & $x^-
   \stackrel{d}{\to} y^+$\\
\hline
   \multirow{2}{*}{$-x - y \le d$} & $x^- - y^+ \le d$ &
   $y^+ \stackrel{d}{\to} x^-$\\
 & $y^- - x^+ \le d$ & $x^+ \stackrel{d}{\to} y^-$\\
\hline
$x\le d$ &
   $x^+ - x^- \le 2d$ & $x^-\stackrel{2d}{\to} x^+$\\
\hline
$-x\le d$ & $x^- - x^+
   \le 2d$ & $x^+\stackrel{2d}{\to} x^-$\\ \hline \end{tabular}
\end{table}

Let $\phi$ be a set of UTVPI constraints.
Each UTVPI constraint $c \in \phi$ is mapped to a set
of difference constraints $D(c)$, or equivalently a
set of weighted edges $E(c)$.
The mapping is shown in the Table~\ref{tab:TraUtvpiAdcCg}.
Each UTVPI constraint on two variables generates two
difference constraints
and accordingly two edges in the constraint
graph.
Each UTVPI constraint on a single variable generates
a single constraint, and hence a single edge.

Let $-v$ denote the counterpart of a vertex $v \in V$, i.e.
$-x^+ := x^-$ and $-x^- := x^+$.
Clearly, for each edge $(x,y,d)\in E$ the graph $G_\phi$ also
includes the edge $(-y,-x,d)$ with
equal weight. This correspondence extends to paths.

\begin{lemma}[\cite{Lahiri:Musuvathi:05}]
   \label{prop:PathCounterPath} If there is a path $P$ from $u$ to $v$ in
   the constraint graph $G_\phi$, then there is a path $P'$ from $-v$ to $-u$
   such
   that $w(P) = w(P')$.
\end{lemma}

If we relax the restriction on variables to take values in $\mathbb{Z}$
and allow them to take values in $\mathbb{Q}$ we can
check satisfiability in $\mathbb{Q}$ using $G_\phi$.

\begin{lemma}[\cite{Lahiri:Musuvathi:05}]
   \label{lemma:UnsatisInQ} A set of UTVPI constraints $\phi$ is
   unsatisfiable in $\mathbb{Q}$ if and only if the constraint graph
   $G_\phi=(V,E)$ contains a negative weight cycle.
\end{lemma}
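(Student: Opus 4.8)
The plan is to factor the statement through the difference-constraint theory already set up: since $G_\phi$ is exactly the constraint graph of the difference constraints $D(\phi)=\bigcup_{c\in\phi}D(c)$ produced by Table~\ref{tab:TraUtvpiAdcCg}, the characterization of difference-constraint satisfiability by the absence of negative weight cycles stated above tells me that $G_\phi$ has no negative weight cycle iff $D(\phi)$ is satisfiable. It therefore suffices to prove that $\phi$ is satisfiable over $\mathbb{Q}$ iff $D(\phi)$ is satisfiable, and then take contrapositives.

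For the forward direction I would take a rational solution $\theta$ of $\phi$ and lift it to a valuation $v$ on the doubled vertex set by $v(x^+)=\theta(x)$ and $v(x^-)=-\theta(x)$, matching the intended meaning that $x^+$ denotes $+x$ and $x^-$ denotes $-x$. A line-by-line check of the five rows of Table~\ref{tab:TraUtvpiAdcCg} shows that every generated difference constraint is satisfied by $v$; for instance, for $x-y\le d$ both $v(x^+)-v(y^+)$ and $v(y^-)-v(x^-)$ reduce to $\theta(x)-\theta(y)\le d$, while for $x\le d$ we get $v(x^+)-v(x^-)=2\theta(x)\le 2d$. Hence $D(\phi)$ is satisfiable.

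The converse is where the real difficulty lies, and is the step I expect to be the main obstacle. Absence of a negative weight cycle yields a solution $v$ of $D(\phi)$, but nothing forces the consistency condition $v(x^+)=-v(x^-)$, so I cannot read off a UTVPI assignment directly. The remedy is to symmetrize $v$ using the structural symmetry of $G_\phi$ (the edge-level form underlying Lemma~\ref{prop:PathCounterPath}): because $(p,q,d)\in E$ implies $(-q,-p,d)\in E$ with equal weight, the reflected valuation $v'(u)=-v(-u)$ again satisfies $D(\phi)$, and hence so does the average $\bar v=\tfrac12(v+v')$, which now obeys $\bar v(x^-)=-\bar v(x^+)$. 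Setting $\theta(x)=\bar v(x^+)=\tfrac12\bigl(v(x^+)-v(x^-)\bigr)$ produces a rational assignment, and averaging the two difference constraints attached to each two-variable UTVPI constraint (respectively halving the single constraint for a one-variable UTVPI constraint) recovers precisely the original inequality, so $\theta$ satisfies $\phi$ over $\mathbb{Q}$. I would stress that this averaging is exactly why the result speaks only of $\mathbb{Q}$: $\theta(x)$ may be a genuine half-integer even when the shortest-path solution $v$ is integral, so integral satisfiability does not follow here and must be handled by a separate tightening argument.
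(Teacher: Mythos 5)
Your proof is correct. Note first that the paper itself gives no proof of this lemma: it is imported verbatim from Lahiri and Musuvathi \cite{Lahiri:Musuvathi:05}, so there is no in-paper argument to compare against; the comparison below is with the standard argument behind the cited result.

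Your two directions are both sound. The lifting $v(x^+)=\theta(x)$, $v(x^-)=-\theta(x)$ handles one direction by direct inspection of Table~\ref{tab:TraUtvpiAdcCg}, and your symmetrization $v'(u)=-v(-u)$ followed by averaging is a clean way to repair the missing invariant $v(x^-)=-v(x^+)$ in the other direction; closure of $D(\phi)$ under the reflection $p-q\le d \mapsto (-q)-(-p)\le d$ is exactly what makes $v'$ a solution, and linearity of difference constraints makes $\bar v$ one. The argument in \cite{Lahiri:Musuvathi:05} reaches the same destination slightly more directly: from a valid potential function (equivalently, any solution $v$ of $D(\phi)$) one simply \emph{defines} $\theta(x)=\tfrac12\bigl(v(x^+)-v(x^-)\bigr)$ and verifies $\phi$ constraint by constraint, each verification being the average of the two difference constraints generated by that UTVPI constraint --- which is your computation with the intermediate object $\bar v$ elided. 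So your route is the same in substance, with the symmetrization made explicit rather than folded into the final formula; the explicit version has the small advantage of isolating where Lemma~\ref{prop:PathCounterPath}'s edge-level symmetry is used. One pedantic point you may wish to patch: in the direction ``$\phi$ satisfiable in $\mathbb{Q}$ $\Rightarrow$ no negative cycle,'' your lifted $v$ is rational, while the cited difference-constraint theorem speaks of (integer) satisfiability; either invoke the trivial cycle-summing argument (summing $v(p)-v(q)\le d$ around any cycle gives $0\le w$, valid for rational $v$), or note that for difference constraints with integer right-hand sides rational and integer satisfiability coincide. Your closing remark about why only $\mathbb{Q}$-satisfiability follows (the half-integer $\theta$) is exactly the right observation and is what the rest of the paper's tightening machinery exists to address.
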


The reason why the feasibility in $\mathbb{Z}$ cannot be tested with
$G_\phi$ arises from the possible implication of constraints of
the form $x + x \leq d$ or $-x - x \leq d$ through the
transitivity of constraints in $\phi$.
If $d$ is odd (equivalently $d/2\in\mathbb{Q}\setminus\mathbb{Z}$)
then $\phi$ may be satisfiable with $x = d/2$ but not with
$x = \lfloor d/2\rfloor$.

\begin{example}\label{ex:utvpi}
Consider the UTVPI problem $\phi' \equiv$ \{$x - y \leq 2$, $x + y \leq -1$,
$-x - z \leq -4$, $-x + z \leq 3$\}, then a transitive
consequence of the first two
is $x + x \leq 1$, while a consequence of the second two is
$-x - x \leq -1$.  Together these require $x = \frac{1}{2}$.

The graph $G_{\phi'}$ is shown in Figure~\ref{fig:utvpi}(a).
A zero length cycle is extracted in Figure~\ref{fig:utvpi}(b).
This cycle has solutions in $\mathbb{Q}$ but not in $\mathbb{Z}$.
\hfill $\Box$
\end{example}

\begin{figure*}[t]
\centerline{
\begin{tabular}{ccc}
$
\xymatrix{
   y^+ \ar[r]^2 & x^+ \ar[r]^{3} \ar[dr]_<<<{-4} & z^+ \ar[dl]^<<<{-4}  \\
   y^- \ar[ur]^<<<{-1} & x^- \ar[l]_2 \ar[ul]_<<<{-1} & z^- \ar[l]_{3}
}
$
&
$
\xymatrix{
   y^+ \ar[r]^2 & x^+ \ar[dr]^<<<{-4} &   \\
    & x^-  \ar[ul]^<<<{-1} & z^- \ar[l]_{3}
}
$
&
$
\xymatrix{
   y^+ \ar[r]^2 & x^+  \ar[dr]_<<<{-4} & z^+ \ar[dl]^<<<{-4} \\
   y^- \ar[ur]^<<<{-1} & x^- \ar[l]_2 \ar[ul]_<<<{-1} & z^-
}
$
\\
(a) & (b) & (c)
\end{tabular}
}
\caption{(a) $G_{\phi'}$ for $\phi'$ of Example~\ref{ex:utvpi}
which is $\mathbb{Q}$ feasible but
not $\mathbb{Z}$ feasible.
(b) a zero length cycle in $G_{\phi'}$.
(c) $G_{\phi}$ for $\phi$ of Example~\ref{ex:utvpi0}.
\label{fig:utvpi}}
\end{figure*}
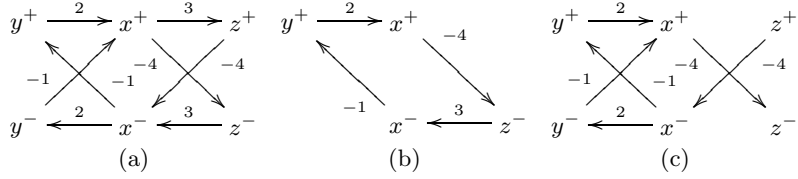

The satisfiability algorithm
of Lahiri and Musuvathi~\cite{Lahiri:Musuvathi:05}
is based on Lemma~\ref{lemma:UnsatisInQ} and the following result.

\begin{lemma}[\cite{Lahiri:Musuvathi:05}]
   \label{lemma:UtvpiSatZ} Suppose $G_\phi$ has no negative cycles
and $\phi$ is unsatisfiable in $\mathbb{Z}$.
Then $G_{\phi}$ contains a zero weight cycle
containing vertices $u$ and $-u$ such that $wSP(u,-u)$ is odd.
\end{lemma}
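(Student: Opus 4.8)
The plan is to combine Lemma~\ref{lemma:UnsatisInQ}, which tells us that the absence of a negative weight cycle makes $\phi$ satisfiable over $\mathbb{Q}$, with Theorem~\ref{theorem:satis:utvpi}. Since $\phi$ is unsatisfiable in $\mathbb{Z}$, the latter yields a constraint $0 \le d' \in TTC(\phi)$ with $d' < 0$. First I would observe that this false constraint cannot be produced by $TC$ alone: every constraint in $TC(\phi)$ is a $\mathbb{Q}$-consequence of $\phi$, so a false $0 \le d'$ obtained purely by transitive closure would contradict $\mathbb{Q}$-satisfiability. Hence at least one tightening step (rule $TI$) is essential in any derivation of $0 \le d'$, and the gap between $\mathbb{Q}$- and $\mathbb{Z}$-reasoning must be created by flooring an odd bound of the form $x+x \le d$ or $-x-x \le d$.

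The target is a single vertex $u$ of $G_\phi$ for which $wSP(u,-u)$ is odd and $wSP(u,-u)+wSP(-u,u)=0$; concatenating the two shortest paths then gives the desired zero weight cycle through $u$ and $-u$. Writing $u = x^-$ (so $-u = x^+$), the quantities $d_1 = wSP(x^-,x^+)$ and $d_2 = wSP(x^+,x^-)$ are exactly the tightest $\mathbb{Q}$-bounds on $x+x$ and $-x-x$, whose tightened forms $x \le \lfloor d_1/2\rfloor$ and $-x \le \lfloor d_2/2\rfloor$ lie in $TTC(\phi)$. The arithmetic payoff is clean: the absence of a negative cycle gives $d_1 + d_2 \ge 0$, while $\lfloor d_1/2\rfloor + \lfloor d_2/2\rfloor \ge (d_1+d_2)/2 - 1 \ge -1$, so these tightened bounds conflict (sum below $0$) if and only if $\lfloor d_1/2\rfloor + \lfloor d_2/2\rfloor = -1$, which forces $d_1 + d_2 = 0$ with both $d_1$ and $d_2$ odd. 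That is precisely a zero weight cycle through $x^-$ and $x^+$ with $wSP(x^-,x^+)$ odd.

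The main obstacle is the reduction step: showing that the $\mathbb{Z}$-contradiction guaranteed by Theorem~\ref{theorem:satis:utvpi} always localises to a single variable's pair of shortest paths, rather than being spread across several variables. I would attack this by taking a derivation of a false $0 \le d'$ that uses the fewest $TI$ steps and inspecting its last tightening, say of $x+x \le d_1$: after it, only $TC$ steps remain, so the tightened unit bound $x \le \lfloor d_1/2\rfloor$ must be combined with a $\mathbb{Q}$-valid bound on $-x$ to reach $0 \le d' < 0$. Translating both bounds into shortest path weights in $G_\phi$ and using Lemma~\ref{prop:PathCounterPath} to pair a path with its counterpart, the binary contributions cancel around $x$, leaving exactly the single-variable situation analysed above; the delicate part is controlling parities so that the cancellation yields $d_1 + d_2 = 0$ with odd summands rather than a strictly positive slack. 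An alternative, and perhaps cleaner, route is to prove the contrapositive: assuming no vertex admits such an odd zero weight cycle, so that $\lfloor wSP(x^-,x^+)/2\rfloor + \lfloor wSP(x^+,x^-)/2\rfloor \ge 0$ for every $x$, adjoin all tightened unit bounds to $\phi$ and construct an explicit integer solution from the resulting (still negative-cycle-free) graph, contradicting $\mathbb{Z}$-unsatisfiability.

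Finally, with the vertex $u=x^-$ in hand, I would assemble the zero weight cycle as the concatenation of a shortest path $SP(x^-,x^+)$ and a shortest path $SP(x^+,x^-)$: its weight is $d_1 + d_2 = 0$, it visits both $u$ and $-u$, and $wSP(u,-u) = d_1$ is odd, which is exactly the statement to be proved.
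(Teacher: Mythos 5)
First, note that the paper itself never proves this lemma: it is imported from Lahiri and Musuvathi, and the closest the paper comes to a proof is the machinery of Section~\ref{sec:incimpl}, where the equivalent reformulation in terms of $\rho$ (Theorem~\ref{theorem:NegCycleDetectionByRho}) is derived from Lemmas~\ref{lemma:twovariables}, \ref{lemma:bounds}, \ref{lemma:graph} and~\ref{lemma:rho}. Your skeleton is essentially that route: use Lemma~\ref{lemma:UnsatisInQ} and Theorem~\ref{theorem:satis:utvpi} to blame tightening, localise the contradiction to a single variable $x$, identify the tightest integer bounds with $\lfloor wSP(x^-,x^+)/2\rfloor$ and $\lfloor wSP(x^+,x^-)/2\rfloor$, and finish with parity arithmetic. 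That final arithmetic is correct and well executed: from $d_1+d_2\ge 0$ and $\lfloor d_1/2\rfloor+\lfloor d_2/2\rfloor\ge (d_1+d_2)/2-1\ge -1$, a conflict forces $\lfloor d_1/2\rfloor+\lfloor d_2/2\rfloor=-1$, hence $d_1+d_2=0$ with $d_1,d_2$ both odd, and concatenating the two shortest paths gives the required zero weight cycle through $u$ and $-u$. (One cosmetic caveat: that concatenation is a closed walk, not necessarily a simple cycle; this reading is the intended one and is what the SCC test in \textsc{LaMu} exploits, but it deserves a sentence.)

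The genuine gap is the step you flag yourself: localisation. Your claim that $\lfloor d_1/2\rfloor$ and $\lfloor d_2/2\rfloor$ are ``the tightest'' bounds establishes only one direction --- that these bounds are derivable, i.e.\ lie in $TTC(\phi)$. The proof needs the converse minimality: no single-variable bound in $TTC(\phi)$ is ever strictly tighter than the floored half of the shortest path. This is exactly the content of Lemma~\ref{lemma:bounds} (tightening can be deferred to the very last step, because any $TC$ step on a tightened bound $by\le d'$ can be mimicked by two $TC$ steps on the doubled constraint $by+by\le 2d'$ or $2d'+1$ followed by one final tightening), combined with Lemma~\ref{lemma:twovariables} and Lemma~\ref{lemma:graph} to yield Lemma~\ref{lemma:rho}. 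Neither of your proposed repairs supplies this. Strategy A fails as stated: in a derivation with fewest $TI$ steps, the premise $x+x\le d_1$ of the \emph{last} tightening may itself rest on earlier tightenings of other variables, so $d_1$ need not be a path weight in $G_\phi$; and the complementary bound on $-x$ need not be ``$\mathbb{Q}$-valid'' as you assert --- it can equally well be the floored half of an odd path weight. (The parity worry you attach to Strategy A is a red herring: parity is settled by your own arithmetic once minimality is available; what is missing is purely the structural commutation of $TI$ past $TC$.) Strategy B is sound in outline, but ``construct an explicit integer solution'' from the parity assumption is precisely the hard direction of Lahiri--Musuvathi's integer-feasibility theorem --- one must show that adding all tightening edges preserves negative-cycle-freeness and then produce an integral solution --- so as written it replaces the lemma by an unproved claim of comparable depth. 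In short: the plan, the graph translation and the arithmetic are right, but the proof is incomplete without analogues of Lemmas~\ref{lemma:twovariables} and~\ref{lemma:bounds}.
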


The algorithm first checks $\mathbb{Q}$ feasibility using
a negative cycle detection algorithm, and then checks that no such
zero weight cycles exists in $G_{\phi}$.

\begin{algorithm2e}[htb]
\caption{\textsc{LaMu}}
   \SetKw{KwAll}{all}
   \SetKw{KwAnd}{and}
\KwIn{$\phi$ a set of UTVPI constraints}
\KwOut{SAT if $\phi$ is satisfiable, UNSAT otherwise}
Construct the constraint
graph $G_\phi = (V,E)$ from $\phi$\;
Run a negative cycle detection algorithm on $G_\phi$\;
\eIf{$G_\phi$ contains a negative cycle}{\Return{UNSAT}}
{let $\pi$ be a valid potential function for $G_\phi$}
$E'$ := $\{ (u,v) ~|~ (u,v,d) \in E, \pi(u) + d = \pi(v)\}$\;
$G'_\phi$ := $(V,E')$\;
Group the vertices in $G'_\phi$ into strongly connected
components (SCCs). Vertices
$u$ and $v$ are in the
same SCC if and only if there is a path from $u$ to $v$
and a path from $v$ to $u$ in $G'_\phi$.
$u$ and $v$ are in the
same SCC exactly when there is a zero-weight cycle
in $G_\phi$ containing $u$ and
$v$.\;
\For{\KwAll $u \in V$}{
  \If{$-u$ is in the same SCC as $u$ \KwAnd $\pi(-u) - \pi(u)$ is odd}
   {\Return{UNSAT}}
}
\Return{SAT}
\end{algorithm2e}

\begin{example}
A valid potential function for the graph shown in Figure~\ref{fig:utvpi}(a)
is $\pi(y^+) = 0$, $\pi(x^+) = 2$, $\pi(z^+) = 5$,
$\pi(y^-) = 3$, $\pi(x^-) = 1$, $\pi(z^-) = -2$.
Each of the arcs is tight, so $E'$ contains all edges,
and all nodes are in the same SCC.
Both $x^+$ and $x^-$ occur in the same SCC and $SP(x^+, x^-) = \pi(x^-) -
\pi(x^+) = -1$ is odd, hence the system is unsatisfiable.
\end{example}

The complexity is $\cO(nm)$ time and  $\cO(n+m)$ space assuming
we use a Bellman-Ford single source shortest path algorithm~\cite{bellman,ford}
for negative cycle detection.

\section{Incremental UTVPI Satisfaction}\label{sec:incsat}

The incremental satisfiability problem is: Given a satisfiable
set of UTVPI $\phi$ (with $n$ variables and $m$ constraints)
and UTVPI constraint $c$, determine if $\phi \cup \{c\}$
is satisfiable.
In this section we define an incremental satisfiability checker
for UTVPI constraints that requires  $\cO(n\log n + m)$ time
and $\cO(n + m)$ space.  It relies on simply making incremental
the algorithm \textsc{LaMu} of Lahiri and Musuvathi.

The key is to incrementalize the negative cycle detection.
We use an algorithm due to Frigioni~\emph{et al.}~\cite{frigioni},
using the simplified form (Algorithm 2: \textsc{IncConDiff})
of Cotton and Maler~\cite{Cotton:Maler:06} (since we are not interested in
edge deletion).
Given a graph $G = (V,E)$ and  valid potential function $\pi$ for $G$
and edge $e = u \stackrel{d}{\rightarrow} v$,
this algorithm returns $G' = (V, E \cup \{e\})$
and a valid potential function $\pi'$ for $G'$
or determines a negative cycle and returns UNSAT.
The complexity is  $\cO(n\log n + m)$ time
and $\cO(n + m)$ space using Fibonacci heaps to implement argmin.

\begin{algorithm2e}[htb]
\caption{\textsc{IncConDiff}}
   \SetKw{KwAll}{all}
   \SetKw{KwAnd}{and}
\KwIn{$G_\phi= (V,E)$ a graph, $\pi$ a valid potential function for $G_\phi$,
edge $(u,v,d)$ a new constraint to add to $G_\phi$.}
\KwOut{UNSAT if $\phi \cup \{u - v \leq d\}$ is unsatisfiable, or
$G_{\phi \cup \{u - v \leq d\}}$ and
a valid potential function $\pi'$ for $G_{\phi \cup \{u - v \leq d\}}$.}
$\gamma(v)$ := $\pi(u) + d = \pi(v)$\;
$\gamma(w)$ := 0 for all $w \neq v$\;
\While{ $min(\gamma) < 0 \wedge \gamma(u) = 0$}{
  $s$ := argmin($\gamma$) \;
  $\pi'(s)$ := $\pi(s) + \gamma(s)$ \;
  $\gamma(s)$ := 0 \;
  \For{\KwAll $s \stackrel{d'}{\rightarrow} t \in G$}{
    \If{$\pi'(t) = \pi(t)$}{$\gamma(t)$ := $\min \{ \gamma(t), \pi'(s) + d'
      - \pi'(t)\}$}
  }
}
\If{$\gamma(u) < 0$}{\Return{UNSAT}}
\Return{$((V, E \cup \{(u,v,d)\}), \pi')$}
\end{algorithm2e}

The incremental UTVPI satisfiability algorithm simply runs
\textsc{IncConDiff} at step 2 of \textsc{LaMu},
the remainder of the algorithm
is unchanged.  Since the remainder of the the \textsc{LaMu}
algorithm requires $\cO(n + m)$ time and space, the complexity
bounds are the same as for the incremental negative cycle detection
algorithm.

\section{Incremental UTVPI Implication}\label{sec:incimpl}

The incremental implication problem is given a set $P$ of $p$
UTVPI constraints and a satisfiable set $\phi$ of $m$ UTVPI constraints
on $n$ variables, where $\phi \not\models c', \forall c' \in P$, as well
as a single new UTVPI constraint $c$, check for
each $c' \in P$ if $\phi \wedge c \models c'$.

Incremental implication is important if we wish to use UTPVI constraints
in a Satisfiability Modulo Theories (SMT) solver~\cite{smt}, as well
as for uses in abstract interpretation and spatial databases.
Our approach to incremental implication is similar to the approach of
Cotton and Maler~\cite{Cotton:Maler:06} for incremental implication for
difference constraints.

The key to the algorithm are the following three results.

\begin{lemma}\label{lemma:twovariables}
Let $ax + by \leq d \in TTC(\phi)$ where $\{a,b\} \subseteq \{-1,1\}$,
then $ax + by \leq d \in TC(\phi)$.
\end{lemma}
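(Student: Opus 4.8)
The plan is to argue by strong induction on the length of a derivation of $ax + by \leq d$ in $TTC(\phi)$, exploiting the simple observation that the tightening rule only ever produces single-variable constraints. Since $ax + by \leq d$ has two nonzero coefficients ($\{a,b\}\subseteq\{-1,1\}$), its final derivation step cannot be a tightening step, so it is either a base case ($ax+by\leq d\in\phi\subseteq TC(\phi)$) or an application of the transitivity rule.

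First I would examine the transitivity step that yields the two-variable result, written as $ax - cy' \le d_1 \wedge cy' + by \le d_2 \Rightarrow ax+by\le d_1+d_2$ with $c\in\{-1,1\}$. The key point is that because the conclusion has $a\neq 0$ and $b\neq 0$, both premises must be genuine two-term constraints: the first premise carries the nonzero terms $ax$ and $-cy'$, and the second carries $cy'$ and $by$. In particular, neither premise can be a single-variable constraint $a'w \le e$, which has exactly one nonzero coefficient. This is exactly what rules out any contribution from tightening: since every tightening output is single-variable, the outputs of tightening can never serve as premises of a transitivity step whose conclusion is a two-variable constraint.

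With this in hand, both premises are two-term constraints whose coefficients lie in $\{-1,1\}$, so each falls under the induction hypothesis on strictly shorter subderivations; importantly, this holds even when their own derivations used tightening, because the induction hypothesis already asserts that such two-variable constraints lie in $TC(\phi)$. Applying the hypothesis gives $ax - cy' \le d_1 \in TC(\phi)$ and $cy' + by \le d_2 \in TC(\phi)$, and then the defining transitivity closure of $TC$ places $ax + by \le d \in TC(\phi)$.

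The main obstacle, and the only place requiring care, is the handling of degenerate premises when the cancelled variable $y'$ coincides with $x$ or $y$. When $y'=x$ with $c=-a$ the first premise becomes the same-variable two-term constraint $ax+ax\le d_1$, which still has coefficients in $\{-1,1\}$ and is therefore covered by the induction hypothesis; the only truly degenerate subcase is $y'=x$ with $c=a$, where the first premise collapses to the trivial $0\le d_1$. Here the conclusion is merely a weakening of the second premise $ax+by\le d_2$ by $d_1\ge 0$, so it suffices to observe that $ax+by\le d_2\in TC(\phi)$ with $d_2\le d$. I would dispatch these degenerate cases first, so that the core induction deals only with the principal case above.
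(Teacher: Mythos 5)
Your main induction is precisely the argument the paper gives (its entire proof is the one-sentence observation that tightening only introduces single-variable constraints, so a two-term member of $TTC(\phi)$ can only arise by transitivity from two-term premises); your principal case and the same-variable premise $ax+ax\le d_1$ are handled correctly. The gap is in the subcase you yourself single out as the only delicate one. When the first premise collapses to $0\le d_1$ you write that ``it suffices to observe that $ax+by\le d_2\in TC(\phi)$ with $d_2\le d$''. It does not suffice: the lemma asserts that the constraint $ax+by\le d$ itself belongs to $TC(\phi)$, and a syntactic closure is not upward closed in the bound. To replicate the weakening step inside $TC(\phi)$ you would need $0\le d_1\in TC(\phi)$, which can fail exactly because $0\le d_1$ may be derivable only via tightened constraints. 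Concretely, take $\phi=\{x-w\le 3,\; w+x\le 4,\; -x\le 0,\; y+z\le 5\}$: transitivity gives $x+x\le 7$, tightening gives $x\le 3$, whence $0\le 3\in TTC(\phi)$, whence, by your degenerate step, $y+z\le 8\in TTC(\phi)$; but every constraint of the form $y+z\le D$ in $TC(\phi)$ has $D=5+7k$ with $k\ge 0$, so $y+z\le 8\notin TC(\phi)$. Thus under the reading of the transitivity rule that makes your degenerate case possible at all, the lemma as stated is false, and no patch of your argument can prove it. (A secondary slip: your inequality $d_2\le d$ presupposes $d_1\ge 0$, but the lemma does not assume $\phi$ satisfiable --- Theorem~\ref{theorem:NegCycleDetectionByRho} applies it to $\mathbb{Z}$-unsatisfiable sets --- so $0\le d_1$ with $d_1<0$ can occur in $TTC(\phi)$.)

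The correct treatment is to rule the case out, not to handle it. In the transitivity rule the cancelled variable must genuinely occur in each premise, with coefficient $-c$ in one and $+c$ in the other; a zero-variable constraint $0\le d_1$ contains no occurrence of any variable, so it can never serve as a premise of a transitivity step. (Coincidence of variables is legitimate when the premise still contains the required occurrence, as in $ax+ax\le d_1$, which your induction hypothesis already covers.) Under this reading --- the intended one, and the one under which the paper's terse justification is sound --- every premise of a transitivity step with a two-term conclusion is itself a two-term constraint, your induction closes, and the exact-bound statement holds. If one instead insists on the liberal instantiation, only the monotone form ``there exists $d'\le d$ with $ax+by\le d'\in TC(\phi)$'' is true; that weaker form is in fact all the paper needs downstream (Corollary~\ref{cor} and Theorem~\ref{theorem:imp} only compare bounds), but it is not what the lemma claims, so you should either adopt the strict reading or explicitly restate the lemma.
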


The result holds since tightening introduces constraints involving a
single variable and any further transitive closure involving them
can only create new constraints involving a single variable.

\begin{lemma}\label{lemma:bounds}
Let $ax \leq d \in TTC(\phi)$ where $a \in \{-1,1\}$ then
$ax \leq d \in TI(TC(\phi))$.
\end{lemma}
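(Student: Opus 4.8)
The plan is to prove the stated inclusion by well-founded induction on the position of $ax \le d$ within a fixed derivation witnessing $ax \le d \in TTC(\phi)$, reading the two closure conditions as inference rules (a transitive-closure step with two premises, and a tightening step with one). The reverse inclusion $TI(TC(\phi)) \subseteq TTC(\phi)$ is immediate, since $TTC(\phi)$ contains $\phi$ and is closed under both operations, so only the direction as stated needs work. Throughout I will lean on Lemma~\ref{lemma:twovariables}: every genuinely two-variable constraint $ax + by \le d$ (with $x \ne y$ and $a,b \in \{-1,1\}$) occurring anywhere in $TTC(\phi)$ already lies in $TC(\phi)$.

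Before the induction I would record two structural facts. First, tightening only ever produces single-variable constraints, never the ``doubled'' form $ax + ax \le D$; and a doubled constraint can arise only from a transitive-closure step whose two premises are both genuinely two-variable, since any single-variable premise cancels its unique variable and leaves a single-variable or constant result. Together with Lemma~\ref{lemma:twovariables} this yields the auxiliary claim that $ax + ax \le D \in TTC(\phi)$ implies $ax + ax \le D \in TC(\phi)$. Second, a constant constraint $0 \le k$ has no variable to cancel and so can never be a premise of a transitive-closure step; constants are dead ends and may be ignored when tracking how single-variable constraints are generated.

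With these in hand, the induction splits on the last rule producing $ax \le d$. If $ax \le d \in \phi$, it lies in $TC(\phi) \subseteq TI(TC(\phi))$. If it was produced by tightening from $ax + ax \le D$ (so $d = \lfloor D/2 \rfloor$), the auxiliary claim places $ax + ax \le D$ in $TC(\phi)$, whence $ax \le d \in TI(TC(\phi))$ by the definition of $TI$. The remaining possibility is a transitive-closure step, and by the structural analysis the only way to obtain a single-variable result is to combine a two-variable constraint (in $TC(\phi)$ by Lemma~\ref{lemma:twovariables}) with a single-variable constraint about the cancelled variable, which by the induction hypothesis already lies in $TI(TC(\phi))$. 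If that single-variable premise is in fact a member of $TC(\phi)$, the combination is a legal step inside the $TC$-closed set $TC(\phi)$ and the result stays in $TC(\phi)$.

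The main obstacle, and the only place real work happens, is this transitive-closure case when the single-variable premise is instead one of the genuinely new tightened constraints: say it is $cy \le d_2$ with $d_2 = \lfloor D/2 \rfloor$ obtained by tightening $cy + cy \le D \in TC(\phi)$, and the other premise is $ax - cy \le d_1 \in TC(\phi)$, yielding $ax \le d_1 + d_2$. I must place this bound directly inside $TI(TC(\phi))$. The plan is to stay within $TC(\phi)$: combine $ax - cy \le d_1$ with $cy + cy \le D$ to obtain $ax + cy \le d_1 + D$, then combine this with $ax - cy \le d_1$ once more to obtain the doubled constraint $ax + ax \le 2 d_1 + D \in TC(\phi)$, and finally tighten. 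The floor identity $\lfloor (2 d_1 + D)/2 \rfloor = d_1 + \lfloor D/2 \rfloor = d_1 + d_2$ then shows the tightened bound is exactly $ax \le d_1 + d_2$, as required. The mirror configuration, in which the cancelled variable enters with the opposite sign so that the tightened premise is $-cy \le d_1$ coming from $-cy - cy \le D'$, is resolved by the identical double-then-tighten maneuver. Verifying that each intermediate combination is a legal instance of the transitive-closure rule — that the $-cy$ and $+cy$ occurrences line up — is the routine but essential bookkeeping that makes the argument close.
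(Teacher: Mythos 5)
Your overall plan --- induction on a fixed derivation, replacing each use of a tightened constraint by the doubled constraint that produced it, and tightening only at the end --- is exactly the argument the paper sketches in one sentence after the lemma, and your treatment of the generic transitive case is sound. The gap is in the case analysis of the transitive step: the premise carrying $ax$ need not be \emph{genuinely} two-variable; it can itself be a doubled constraint on $x$, with $x$ playing the role of the cancelled variable. Concretely, a derivation in $TTC(\phi)$ may combine $ax + ax \le d_1 \in TC(\phi)$ with a tightened constraint $-ax \le d_2$, where $d_2 = \lfloor D/2 \rfloor$ and $-ax - ax \le D \in TC(\phi)$, to produce $ax \le d_1 + d_2$. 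Here your double-then-tighten maneuver collapses: combining $ax + ax \le d_1$ with $-ax - ax \le D$ yields not a usable constraint ``$ax + cy \le d_1 + D$'' but only the constant $0 \le d_1 + D$, which by your own observation can never serve as a premise again, so you never reach $ax + ax \le 2d_1 + D$. This configuration is reachable: for $\phi = \{x+y \le 0,\ x-y \le 1,\ -x+z \le 0,\ -x-z \le 0\}$ one has $x+x \le 1$ and $-x-x \le 0$ in $TC(\phi)$, hence $-x \le 0 \in TTC(\phi)$, hence $x \le 1 \in TTC(\phi)$ by exactly this degenerate step; applied to that derivation, your argument does not place $x \le 1$ in $TI(TC(\phi))$.

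The repair needs one extra idea: unfold the doubled premise one level. Since $\phi$ contains no input constraints of the form $ax + ax \le d$ (none is admitted by the translation of Table~\ref{tab:TraUtvpiAdcCg}), your own structural fact gives genuinely two-variable constraints $ax - cy \le e_1$ and $cy + ax \le e_2$ in $TC(\phi)$ with $y \ne x$ and $e_1 + e_2 = d_1$. Then, working entirely inside $TC(\phi)$: combine $cy + ax \le e_2$ with $-ax - ax \le D$ to get $cy - ax \le e_2 + D$; combine that with $cy + ax \le e_2$ to get $cy + cy \le 2e_2 + D$; combine with $ax - cy \le e_1$ to get $ax + cy \le e_1 + 2e_2 + D$; combine with $ax - cy \le e_1$ once more to get $ax + ax \le 2d_1 + D$; finally tighten to obtain $ax \le d_1 + \lfloor D/2 \rfloor = d_1 + d_2 \in TI(TC(\phi))$, as required. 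Note also that the no-doubled-input assumption is genuinely needed and worth stating: for $\phi = \{x + x \le 2,\ -x - x \le 0\}$ one has $x \le 2 \in TTC(\phi)$ while the only upper bound on $x$ in $TI(TC(\phi))$ is $x \le 1$, so without that (implicit) assumption the lemma as stated is false. The paper's one-line justification glosses over precisely this degenerate case, so filling it in as above is where the real content of a complete proof lies.
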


The result holds since any result of transitive closure
on a new UTVPI constraint $by \leq d'$ introduced by tightening,
can be mimicked using the constraint $by + by \leq \{2d',2d'+1\}$
that introduced it,
and tightening the end result.

The above two results show that $TC(\phi)$ is the crucial set
of interest for UTVPI implication checking. The following result
shows how we can use the constraint graph to reason about $TC(\phi)$.

\begin{lemma}\label{lemma:graph}
$c \in TC(\phi)$ iff there is a cycle of length $d$, in the case of
$c\equiv 0\leq d$, or a path $u \rightsquigarrow v$ of length $d$ in $G_\phi$
where $(u,v,d) \in E(c)$.
\end{lemma}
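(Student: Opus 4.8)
The plan is to prove both directions by induction, reading each graph vertex as a signed variable: $x^+$ denotes the value $+x$ and $x^-$ the value $-x$. With this reading, Table~\ref{tab:TraUtvpiAdcCg} says precisely that every edge $s\stackrel{d}{\to}t$ of $G_\phi$ encodes the difference constraint $\mathrm{val}(t)-\mathrm{val}(s)\le d$, so summing along any path $u\rightsquigarrow v$ of weight $w$ yields $\mathrm{val}(v)-\mathrm{val}(u)\le w$, which is exactly a UTVPI constraint $c$ with $(u,v,w)\in E(c)$; a cycle ($u=v$) yields $0\le w$. Since $E=\bigcup_{c\in\phi}E(c)$, every edge of every path arises from some constraint of $\phi$. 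The single structural fact driving everything is that eliminating $y$ in a transitive-closure step, where $y$ occurs as $-cy$ in one premise and as $+cy$ in the other, corresponds exactly to gluing two subpaths at the common vertex representing $cy$ (namely $y^+$ if $c=1$ and $y^-$ if $c=-1$).

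For the forward implication I would induct on the construction of $TC(\phi)$. In the base case $c\in\phi$ each edge $(u,v,d)\in E(c)$ is already a one-edge path $u\rightsquigarrow v$ of weight $d$. For the inductive step, let $ax-cy\le d_1$ and $cy+bz\le d_2$ be in $TC(\phi)$ and produce $ax+bz\le d_1+d_2$. By the induction hypothesis each premise is witnessed by a path, and when both premises genuinely involve two variables Lemma~\ref{prop:PathCounterPath} lets me pick for each the representative whose path touches the vertex representing $cy$; these two subpaths then meet there and concatenate into a single path of weight $d_1+d_2$ witnessing an edge of $E(ax+bz\le d_1+d_2)$. If $x=z$ with $b=a$ this path runs between $x^-$ and $x^+$ and witnesses the self-constraint $ax+ax\le d_1+d_2$, while if $b=-a$ its endpoints coincide and it is a cycle of weight $d_1+d_2$, the $0\le d$ case.

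For the converse I would induct on the number of edges of the given path (or cycle) $u\rightsquigarrow v$. Two consecutive edges $s\stackrel{d_a}{\to}w\stackrel{d_b}{\to}t$ encode $\mathrm{val}(w)-\mathrm{val}(s)\le d_a$ and $\mathrm{val}(t)-\mathrm{val}(w)\le d_b$; writing $y$ for the variable of the common vertex $w$, this vertex forces $y$ to appear as $-cy$ in the first and as $+cy$ in the second, so a single transitive-closure step produces $\mathrm{val}(t)-\mathrm{val}(s)\le d_a+d_b\in TC(\phi)$. Folding the path edge by edge collapses it to one constraint $\mathrm{val}(v)-\mathrm{val}(u)\le w$, which is the claimed $c$, and a cycle collapses to $0\le w$.

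The routine part is the exhaustive sign bookkeeping over $a,b,c\in\{-1,0,1\}$ and over the two edges of each $E(c)$, which Lemma~\ref{prop:PathCounterPath} lets me halve by arguing about one representative and transferring to its counterpart. The step I would treat most carefully is the unit case, where a premise or the conclusion involves a single variable: a bound $ax\le d$ enters the graph with a doubled weight, as the self-constraint $ax+ax\le 2d$ on the edge between $x^-$ and $x^+$. Consequently, when a single-variable premise is eliminated its doubled edge must be inserted between the two counterpart halves of the other premise's path, and the factor of two must be tracked so that the witnessed member of $TC(\phi)$ is correctly identified as the self-constraint $ax+ax\le w$ — whose tightening belongs to $TI(\phi)$, not $TC(\phi)$ — rather than as $ax\le w/2$.
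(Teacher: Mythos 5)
Your overall strategy---induction on the $TC$-derivation in one direction and on the path length in the other---is exactly the induction the paper's one-sentence proof appeals to, and your forward direction fills in the details correctly, including the two points the paper leaves implicit: gluing the two witnessing paths at the vertex of the eliminated literal $cy$ (choosing representatives via Lemma~\ref{prop:PathCounterPath}), and splicing the doubled witness of a unit premise $cy \le d_2$ between the two counterpart halves of the other premise's path.

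The backward direction, however, has a genuine gap, and it sits precisely in the unit case you flag. Your folding step assumes that every edge of the path contributes to $TC(\phi)$ the difference constraint it encodes; but a doubled edge $x^- \stackrel{2d}{\to} x^+$ arising from $x \le d$ contributes only $x \le d$, not the self-constraint $x + x \le 2d$. The closure rule requires the eliminated variable to occur with coefficient $\pm 1$ in \emph{both} premises (this is what makes the second disjunct of Corollary~\ref{cor} non-redundant), so a unit constraint can enter a closure step only as the premise that gets eliminated, and every such step again produces a unit conclusion: $TC$ can never ``double'' $x \le d$ into $x + x \le 2d$. Consequently the fold cannot always be realized by $TC$ steps. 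Concretely, take $\phi = \{x \le 1,\ y - x \le 0\}$. Then $G_\phi$ contains the path $x^- \stackrel{2}{\to} x^+ \stackrel{0}{\to} y^+$ of length $2$ and $(x^-, y^+, 2) \in E(x+y\le 2)$, yet $TC(\phi) = \{x \le 1,\ y - x \le 0,\ y \le 1\}$, so $x + y \le 2 \notin TC(\phi)$: folding the two edges forces you to use $x \le 1$ as a premise, which eliminates $x$ and yields $y \le 1$, whose variables no longer match the path's endpoints. (This shows the ``path $\Rightarrow$ membership'' half of the lemma is false as literally stated; the paper's own proof sketch skates over the same point, and the downstream results survive only because Corollary~\ref{cor} and Theorem~\ref{theorem:imp} carry the extra unit-bounds disjunct, which is exactly what such paths witness.) Your closing calibration errs in the mirror direction as well: a path $x^- \rightsquigarrow x^+$ that itself traverses a doubled edge witnesses a unit bound, not a self-constraint---for $\phi = \{x - y \le d_1,\ y \le d_2\}$ the path $x^- \rightsquigarrow y^- \stackrel{2d_2}{\to} y^+ \rightsquigarrow x^+$ of weight $2(d_1+d_2)$ witnesses $x \le d_1 + d_2 \in TC(\phi)$, while $x + x \le 2(d_1+d_2) \notin TC(\phi)$. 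A correct backward statement has to split on whether the path uses a doubled edge: if not, your folding argument works and yields $TC$-membership; if so, one can only conclude membership of suitable unit bounds (equivalently, membership in $TTC(\phi)$ after tightening and bound combination), which is all that Lemma~\ref{lemma:rho} and Theorem~\ref{theorem:imp} actually need.
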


\begin{proof}
   This lemma follows straightforward of the definition of $TC(\phi)$ and the
   transformation of the set of UTVPI constraints to its constraint graph. It
   can be prove easily by induction over the number of transitive closure steps
   in $TC(\phi)$ resp. the length of cycle or path in $G_\phi$.
   \hfill $\Box$
\end{proof}

\begin{example}
Consider $\phi$ of Example~\ref{ex:utvpi0}.
Then for example $x + x \leq 1 \in TC(\phi)$ and there is a path
$x^- \rightsquigarrow x^+$ of length 1 in $G_\phi$ shown in
Figure~\ref{fig:utvpi}(c). Similarly $y -z \leq -5 \in TC(\phi)$
and there are paths $y^- \rightsquigarrow z^-$ and
$z^+ \rightsquigarrow y^+$ of length $-5$ in $G_\phi$.
\end{example}

We can use paths (in particular shortest paths) in $G_\phi$ to
reason about most constraints in $TTC(\phi)$.
In order
to handle tightening we introduce a \emph{bounds function} $\rho$
which records the upper and lower bounds for each variable $x$,
on the vertices $x^+$ and $x^-$. It is defined as:
$$
\rho(u) = \left\lfloor \frac{wSP(u,-u)}{2} \right\rfloor.
$$

We can show that $\rho(x^-)$ computes the the upper bound
of $x$ and $- \rho(x^+)$ is the lower bound of $x$.
Using Lemmas~\ref{lemma:bounds} and~\ref{lemma:graph} we have.

\begin{lemma}\label{lemma:rho}
For UTVPI constraints $\phi$,
\begin{eqnarray*}
\rho(x^-) & = & \min \{ d ~|~ x \leq d \in TTC(\phi) \} \\
\rho(x^+) & = & \min \{ d ~|~ -x \leq d \in TTC(\phi) \}
\end{eqnarray*}
where we assume $\min \emptyset = +\infty$.
\end{lemma}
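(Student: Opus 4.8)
The plan is to prove the two equalities by relating the quantity $\rho(x^-) = \lfloor wSP(x^-, x^+)/2 \rfloor$ to the minimal upper bound of $x$ derivable in $TTC(\phi)$, and symmetrically for $\rho(x^+)$. I would prove only the first equality in detail, since the second follows by the counterpart symmetry of Lemma~\ref{prop:PathCounterPath} (observing that $-x^- = x^+$ and swapping the roles of $x$ and $-x$). The strategy is a two-sided inequality: show that $\rho(x^-)$ is an \emph{achievable} bound, i.e.\ $x \leq \rho(x^-) \in TTC(\phi)$, and that it is the \emph{smallest} such bound.

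For the direction that $x \leq \rho(x^-) \in TTC(\phi)$, I would start from the shortest path realizing $wSP(x^-, x^+)$. By Lemma~\ref{lemma:graph}, a path $x^- \rightsquigarrow x^+$ of length $d$ in $G_\phi$ corresponds to a constraint $x + x \leq d \in TC(\phi)$ (since $(x^-, x^+, d) \in E(c)$ exactly for $c \equiv x + x \leq d$, reading the transformation table for $x \leq d/2$ backwards). Taking $d = wSP(x^-, x^+)$ gives $x + x \leq wSP(x^-,x^+) \in TC(\phi) \subseteq TTC(\phi)$, and the tightening rule in the definition of $TI$ then yields $x \leq \lfloor wSP(x^-,x^+)/2 \rfloor = \rho(x^-) \in TTC(\phi)$. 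Hence $\rho(x^-) \geq \min\{d \mid x \leq d \in TTC(\phi)\}$.

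For the reverse inequality I would take any bound $x \leq d \in TTC(\phi)$ and show $d \geq \rho(x^-)$. By Lemma~\ref{lemma:bounds}, $x \leq d \in TI(TC(\phi))$, so this bound arises by tightening some $x + x \leq d'' \in TC(\phi)$ with $d = \lfloor d''/2 \rfloor$ (the single-variable constraint must come from the tightening rule applied to a doubled constraint, since $TC$ alone cannot produce a one-variable constraint from genuinely two-variable inputs). Applying Lemma~\ref{lemma:graph} to $x + x \leq d''$ gives a path $x^- \rightsquigarrow x^+$ of length $d''$, so $wSP(x^-,x^+) \leq d''$ by definition of the shortest path; monotonicity of $\lfloor \cdot/2 \rfloor$ then gives $\rho(x^-) = \lfloor wSP(x^-,x^+)/2\rfloor \leq \lfloor d''/2 \rfloor = d$. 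Combining both directions establishes the first equality, and the case $wSP(x^-,x^+) = +\infty$ (no path, no bound) matches the $\min\emptyset = +\infty$ convention.

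The main obstacle I anticipate is the careful bookkeeping in the reverse direction: justifying that \emph{every} single-variable bound in $TTC(\phi)$ factors through a tightening step applied to a doubled two-variable constraint, rather than arising in some other way through interleaved transitive-closure and tightening operations. This is precisely the content that Lemma~\ref{lemma:bounds} (and the informal argument preceding it about mimicking transitive closure on tightened constraints) is meant to discharge, so I would lean on that lemma to collapse any bound back to the form $x + x \leq d''$ in $TC(\phi)$; the remaining floor-function and shortest-path manipulations are routine.
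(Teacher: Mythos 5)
Your overall structure (two-sided inequality, Lemma~\ref{lemma:bounds} to reduce to $TI(TC(\phi))$, Lemma~\ref{lemma:graph} to translate to paths) is exactly what the paper intends --- the paper offers no detailed proof, just the remark that the lemma follows from Lemmas~\ref{lemma:bounds} and~\ref{lemma:graph} --- and your first direction is sound. However, there is a concrete error in your reverse direction. You claim that every bound $x \leq d \in TI(TC(\phi))$ ``must come from the tightening rule applied to a doubled constraint, since $TC$ alone cannot produce a one-variable constraint from genuinely two-variable inputs.'' That justification is false for general $\phi$: the lemma is stated for arbitrary UTVPI constraint sets, and $\phi$ itself may contain one-variable constraints, which moreover propagate through transitive closure. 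For instance, if $\phi = \{x \leq 5\}$ then $x \leq 5 \in TC(\phi)$ with no tightening involved; similarly $x + y \leq d_1$ and $-y \leq d_2$ yield $x \leq d_1 + d_2 \in TC(\phi)$ by the transitive-closure rule with $b = 0$. So your case analysis, as written, does not cover all bounds in $TI(TC(\phi))$, and the chain ``$d = \lfloor d''/2\rfloor$ for some $x+x \leq d'' \in TC(\phi)$'' simply does not exist for such bounds.

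The gap is repairable without changing your strategy. Since the tightening rule only ever adds constraints of the form $ax \leq e$, which can never themselves trigger further tightening, one has $TI(TC(\phi)) = TC(\phi) \cup \{ax \leq \lfloor d''/2 \rfloor \mid ax + ax \leq d'' \in TC(\phi)\}$, so there are exactly two cases. Your argument handles the second; for the first, when $x \leq d \in TC(\phi)$ directly, apply Lemma~\ref{lemma:graph} with the table row $E(x \leq d) = \{x^- \stackrel{2d}{\to} x^+\}$: there is a path $x^- \rightsquigarrow x^+$ of length $2d$, hence $wSP(x^-,x^+) \leq 2d$ and $\rho(x^-) = \lfloor wSP(x^-,x^+)/2 \rfloor \leq d$. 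With that case added, both directions hold and the proof is complete; the symmetry argument for $\rho(x^+)$ is fine.
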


\begin{example}
Consider the graph in Figure~\ref{fig:utvpi}(c)
for constraints $\phi$ of Example~\ref{ex:utvpi0}.
Then $\rho(x^-) = 0$ since $wSP(x^-,x^+)$ equals to $1$
and $x \leq 0 \in TTC(\phi)$,
while $\rho(z^+) = -4$ since $wSP(z^+,z^-) = -7$
and $-z \leq -4 \in TTC(\phi)$.
Note e.g. $\rho(x^+) = +\infty$ and there is no constraint
of the form $-x \leq d$ in $TTC(\phi)$.
\end{example}

The key to incremental satisfaction is the following result.
\begin{theorem}
   \label{theorem:NegCycleDetectionByRho}
   If the constraint graph
   $G_\phi$ contains no negative weight cycle (i.e. $\phi$ is
   satisfiable in $\mathbb{Q}$) then $\phi$ is unsatisfiable in $\mathbb{Z}$
   iff a vertex $v\in V$ exists with $\rho(v) + \rho(-v) < 0$.
\end{theorem}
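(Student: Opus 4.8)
The plan is to prove the two directions separately, with the bridge between them being an elementary fact about floors: for integers $a,b$ with $a+b\ge 0$ one has $\lfloor a/2\rfloor+\lfloor b/2\rfloor\ge -1$ (since each floor is at least $(a-1)/2$ resp.\ $(b-1)/2$), and this integer sum is negative---hence equal to $-1$---exactly when $a+b=0$ and $a$ is odd. Since every vertex of $G_\phi$ has the form $x^+$ or $x^-$, the counterpart $-v$ is always defined and $\rho(v)+\rho(-v)=\lfloor wSP(v,-v)/2\rfloor+\lfloor wSP(-v,v)/2\rfloor$. This arithmetic observation is what lets me translate the numerical condition $\rho(v)+\rho(-v)<0$ into a statement about the two shortest-path weights along the cycle through $v$ and $-v$.

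For the direction that $\rho(v)+\rho(-v)<0$ forces unsatisfiability in $\mathbb{Z}$, I would argue semantically rather than manipulate $TTC(\phi)$ by hand. Writing $v=x^+$ (the case $v=x^-$ is symmetric), the hypothesis forces both $\rho(x^-)$ and $\rho(x^+)$ to be finite, so by Lemma~\ref{lemma:rho} we have $x\le\rho(x^-)\in TTC(\phi)$ and $-x\le\rho(x^+)\in TTC(\phi)$, and by soundness of the closure $\phi\models x\le\rho(x^-)$ and $\phi\models -x\le\rho(x^+)$. Any integer model $M$ of $\phi$ would then satisfy $M(x)\le\rho(x^-)$ and $-M(x)\le\rho(x^+)$; adding these yields $0\le\rho(x^-)+\rho(x^+)<0$, a contradiction, so $\phi$ has no integer model. (This direction relies on Lemma~\ref{lemma:rho} together with soundness of the closure; the no-negative-cycle hypothesis enters only through well-definedness of the shortest-path weights underlying $\rho$.)

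For the converse I would invoke Lemma~\ref{lemma:UtvpiSatZ}: under the no-negative-cycle hypothesis, $\mathbb{Z}$-unsatisfiability yields a vertex $u$ and a zero weight cycle through $u$ and $-u$ with $wSP(u,-u)$ odd. The remaining work is to pin down $wSP(-u,u)$. Decomposing that zero weight cycle at $u$ and $-u$ into a segment from $u$ to $-u$ and a segment from $-u$ to $u$, its total weight $0$ is at least $wSP(u,-u)+wSP(-u,u)$, while absence of negative cycles gives $wSP(u,-u)+wSP(-u,u)\ge 0$; hence $wSP(u,-u)+wSP(-u,u)=0$. Setting $a=wSP(u,-u)$ (odd) and $b=-a$, the floor computation gives $\rho(u)+\rho(-u)=\lfloor a/2\rfloor+\lfloor -a/2\rfloor=-1<0$, so $v=u$ witnesses the claim.

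The step I expect to be most delicate is this reduction of the \emph{zero weight cycle} produced by Lemma~\ref{lemma:UtvpiSatZ} to the clean identity $wSP(u,-u)+wSP(-u,u)=0$, i.e.\ the assertion that the shortest paths in both directions already realise the cycle. One must check that the cycle genuinely visits both $u$ and $-u$ (so that it splits cleanly at those two vertices) and then squeeze the cycle-weight bound against the no-negative-cycle bound to force equality; once that is in place, the parity bookkeeping in the floors is routine. A minor but worthwhile point to record is that $\rho(v)+\rho(-v)<0$ already forces both summands finite, so that Lemma~\ref{lemma:rho} really does supply the two single-variable constraints used in the first direction.
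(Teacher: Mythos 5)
Your proof is correct, but it reaches the hard direction by a genuinely different route than the paper. The paper's own proof never invokes Lemma~\ref{lemma:UtvpiSatZ}: it stays inside the closure framework, arguing that by Theorem~\ref{theorem:satis:utvpi} unsatisfiability over $\mathbb{Z}$ means some constraint $0\le d$ with $d<0$ lies in $TTC(\phi)$; by Lemma~\ref{lemma:graph} and the absence of negative cycles no such constraint lies in $TC(\phi)$; by Lemma~\ref{lemma:twovariables} everything in $TTC(\phi)\setminus TC(\phi)$ involves a single variable, so the contradiction must be produced by two bounds $x\le d_1$ and $-x\le d_2$ in $TTC(\phi)$ with $d_1+d_2<0$; and Lemma~\ref{lemma:rho} converts that into $\rho(x^+)+\rho(x^-)<0$. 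You instead outsource the combinatorial heart to Lemma~\ref{lemma:UtvpiSatZ} (the Lahiri--Musuvathi zero-weight-cycle characterization) and finish with the squeeze $0\ge wSP(u,-u)+wSP(-u,u)\ge 0$ followed by the parity computation $\lfloor a/2\rfloor+\lfloor -a/2\rfloor=-1$; for the easy direction you argue semantically (an integer model would violate the two bounds supplied by Lemma~\ref{lemma:rho}), where the paper argues syntactically that the contradictory bounds close to $0\le d_1+d_2<0$ in $TTC(\phi)$. Both routes are sound. The paper's route buys independence from Lemma~\ref{lemma:UtvpiSatZ}, so the theorem rests only on the implication machinery (Lemmas~\ref{lemma:twovariables}--\ref{lemma:rho}) just developed, although its claim that the contradiction must come from two one-variable bounds is left rather implicit. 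Your route buys a concrete graph-theoretic argument whose delicate points (the zero cycle genuinely visits both $u$ and $-u$, and closed walks have non-negative weight when no negative cycle exists) you correctly identify and discharge, plus the sharper byproduct that any witness in fact satisfies $\rho(v)+\rho(-v)=-1$. One small remark: ``soundness of the closure'' over $\mathbb{Z}$ is not isolated as a lemma in the paper; to stay within its stated toolkit you could instead cite the ``if'' direction of Corollary~\ref{cor}, which applies because your argument assumes, for contradiction, that an integer model exists and hence that $\phi$ is satisfiable.
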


\begin{proof}
   Let $\phi$ be a satisfiable set of UTVPI constraints in $\mathbb{Q}$. Because
   of the Lemma~\ref{lemma:graph} it applies the non-existence of a constraint
   $0<d \in TC(\phi)$ where $d<0$. Therefore $\phi$ is unsatisfiable in
   $\mathbb{Z}$ iff such a constraint belongs to $TTC(\phi)\setminus TC(\phi)$
   (Theorem~\ref{theorem:satis:utvpi}), i.e. a possible unsatisfiability is
   caused by tightening.

   The Lemma~\ref{lemma:twovariables} implies the equivalence for each
   constraint $c\in TTC(\phi)\setminus TC(\phi)$ to $ax\leq d$ where
   $a\in \{-1, 0, 1\}$. Hence, $\phi$ is unsatisfiable in $\mathbb{Z}$ iff two
   constraints $x\leq d_1$ and $-x\leq d_2$ with $d_1+d_2<0$ exist in
   $TTC(\phi)$ iff (Lemma~\ref{lemma:rho}) $\rho(x^+)+\rho(x^-)<0$.
   \hfill $\Box$
\end{proof}

Effectively failure can only be caused by tightening if the bounds
of a single variable contradict.
\begin{example}
Consider the graph in Figure~\ref{fig:utvpi}(a)
for constraints $\phi'$ of Example~\ref{ex:utvpi}.
There is no negative weight cycle in $G_{\phi'}$ but
$\rho(x^-) = 0$ and $\rho(x^+) = -1$ because of $x^+
\stackrel{-4}{\rightarrow} z^- \stackrel{3}{\rightarrow} x^-$.
Hence the system is unsatisfiable.
\end{example}

Similarly the key to incremental implication is the following rephrasing of
Corollary~\ref{cor}.
\begin{theorem}\label{theorem:imp}
   If $\phi$ is a satisfiable set of UTVPI constraints
   then $\phi \models c$ iff for all $(u,v,d)\in E(c)$ either
   $wSP(u,v) \leq d$ or $\rho(u) + \rho(-v) \leq d$.
\end{theorem}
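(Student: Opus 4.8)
The plan is to derive the theorem directly from Corollary~\ref{cor}, translating each of its two disjuncts into the matching graph-theoretic disjunct in the statement. Recall that for satisfiable $\phi$ and $c \equiv ax + by \le d$, Corollary~\ref{cor} gives $\phi \models c$ iff either (A) $ax + by \le d' \in TTC(\phi)$ for some $d' \le d$, or (B) $\{ax \le d_1,\, by \le d_2\} \subseteq TTC(\phi)$ for some $d_1 + d_2 \le d$. I would show that (A) is equivalent to $wSP(u,v) \le d$ and that (B) is equivalent to $\rho(u) + \rho(-v) \le d$, for any edge $(u,v,d) \in E(c)$.

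For the first disjunct I would treat the two-variable case ($a,b \in \{-1,1\}$) by first applying Lemma~\ref{lemma:twovariables} to replace $TTC(\phi)$ by $TC(\phi)$, since a two-variable constraint lies in the tightened transitive closure exactly when it lies in $TC(\phi)$. Then Lemma~\ref{lemma:graph} converts membership $ax + by \le d' \in TC(\phi)$ into the existence of a path $u \rightsquigarrow v$ of length $d'$ in $G_\phi$, where $(u,v,d')$ has the same endpoints as the edges of $E(c)$ and only the weight varies with $d'$. Since $wSP(u,v)$ is the minimum weight over all such paths, the existence of some $d' \le d$ is precisely $wSP(u,v) \le d$. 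Lemma~\ref{prop:PathCounterPath} guarantees that the two edges of $E(c)$ yield the same condition, so the universal quantifier ``for all $(u,v,d) \in E(c)$'' is consistent.

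For the second disjunct I would use Lemma~\ref{lemma:rho}, which identifies the least bound $d_1$ with $ax \le d_1 \in TTC(\phi)$ as $\rho(x^-)$ when $a = 1$ and as $\rho(x^+)$ when $a = -1$, and similarly for $by \le d_2$. The verification then reduces to checking, constraint type by constraint type in Table~\ref{tab:TraUtvpiAdcCg}, that for each edge $(u,v,d) \in E(c)$ the two vertices $u$ and $-v$ are exactly the vertices carrying these least bounds: for instance, for $c \equiv x - y \le d$ with edge $(y^+, x^+, d)$ one has $\rho(u) + \rho(-v) = \rho(y^+) + \rho(x^-)$, matching the minimal $d_1 + d_2$ for $\{x \le d_1,\, -y \le d_2\}$. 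Minimizing over $d_1, d_2$ then makes (B) equivalent to $\rho(u) + \rho(-v) \le d$. The single-variable constraints $x \le D$ and $-x \le D$ are handled separately: here $E(c)$ has the single edge of weight $2D$, the counterpart $-v$ coincides with $u$, and the $\rho$-disjunct reads $2\rho(x^-) \le 2D$ (resp.\ $2\rho(x^+) \le 2D$), which by Lemma~\ref{lemma:rho} is exactly $\phi \models x \le D$ (resp.\ $\phi \models -x \le D$); since the $wSP$-disjunct implies the $\rho$-disjunct in this case, the disjunction is again equivalent to implication.

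I expect the main obstacle to be the bookkeeping for the second disjunct: getting the vertex labels $u$ and $-v$ to line up correctly with the $\rho$-arguments dictated by Lemma~\ref{lemma:rho} across all constraint orientations, and confirming via Lemma~\ref{prop:PathCounterPath} that the choice of edge within $E(c)$ does not matter. Each individual case is a short substitution, but care is needed to handle the sign conventions of the $+$ and $-$ vertices uniformly and to ensure that the floor in the definition of $\rho$ does not disturb the single-variable case.
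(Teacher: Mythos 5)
Your proposal is correct and follows essentially the same route as the paper's proof: both reduce the statement to Corollary~\ref{cor}, then dispatch the two-variable disjunct via Lemmas~\ref{lemma:twovariables} and~\ref{lemma:graph} (paths and $wSP$) and the bounds disjunct via Lemma~\ref{lemma:rho}. The only difference is that you spell out the per-edge vertex bookkeeping, the role of Lemma~\ref{prop:PathCounterPath} in justifying the ``for all $(u,v,d)\in E(c)$'' quantifier, and the single-variable case, all of which the paper compresses into ``the theorem holds straightforward.''
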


\begin{proof}
   Let $\phi$ be a satisfiable set of UTVPI constraints. Because of
   Corollary~\ref{cor} it holds $\phi \models c$ and $c\equiv ax + by\leq d$ iff
   $ax + by \leq d' \in TTC(\phi)$ and $d' \leq d$ or
   $\{ax \leq d_1, by \leq d_2\} \subseteq TTC(\phi)$ and $d_1 + d_2 \leq d$.

   Now, the theorem holds straightforward due to Lemma~\ref{lemma:rho} for the
   constraints with one variable, and
   Lemma~\ref{lemma:twovariables} and~\ref{lemma:graph} for the other
   constraints.
   \hfill $\Box$
\end{proof}

\begin{example}
Consider the graph in Figure~\ref{fig:utvpi}(c)
for constraints $\phi$ of Example~\ref{ex:utvpi0}.
$\phi \models -z \leq -3$ is shown since
$wSP(z^+,z^-) = -7 \leq 2 \times -3$.
\end{example}

\begin{algorithm2e}[htb]
   \caption{\textsc{ScSt} -- Incremental satisfiability and implication for
   UTVPI constraints.}
   \label{alg:InConCheAlgForUtvCon}
   \KwIn{$G_\phi=(V,E)$ a \emph{constraint graph} representing set of UTVPI
     constraints $\phi$, $\pi$ a \emph{valid
       potential
function} on $G_\phi$, $\rho$ the \emph{bound function} of
 $\phi$, $P$ a set of UTVPI constraints not implied by $\phi$,
  and a UTVPI constraint $c$ to be added.}
   \KwOut{$G_{\phi\cup\{c\}}$, its \emph{valid potential function} $\pi'$ and
the bound function $\rho'$ of $\phi\cup\{c\}$
and the set $P' \subseteq P$ of constraints not implied by $\phi \cup \{c\}$, or
UNSAT if $\phi\cup\{c\}$ is not satisfiable.}
   \SetKw{KwAll}{all}
   \SetKw{KwAnd}{and}
   \SetKw{KwOr}{or}
   $G' := G_\phi$, $\pi' := \pi$, $\rho' = \rho$, compute $E(c)$\;
   \nllabel{alg:line:TransCons}
   \For{\KwAll $e\in E(c)$}{
      $res := \text{\textsc{IncConDiff}}(G', \pi', e)$\;
      \nllabel{alg:line:AddEdge}
      \eIf{$res = \text{UNSAT}$}{
         \Return{UNSAT}
      }{  $(G', \pi')$ := $res$ }
   }
   let $(u,v,d)$ be any edge in $E(c)$\;
   \nllabel{alg:line:MaintainRhoBegin}
   compute $\delta^\leftarrow_u$ and $\delta^\rightarrow_v$ by using
the reduced cost graph for $G'$ via $\pi'$\;
   \nllabel{alg:line:ComputeShortestPath}
   \For{\KwAll $x\in V$}{
      $sp := \delta^\leftarrow_u(x) + d +\delta^\to_v(-x)$\;
      $\rho'(x)$ := $\min \{ \rho(x), \lfloor \frac{sp}{2} \rfloor \}$\;
      \nllabel{alg:line:MaintainRhoEnd}
   }
   \For{\KwAll $x\in V$}{
      \nllabel{alg:line:TestInZBegin}
       \lIf{$\rho'(x) + \rho'(-x) < 0$}{\Return{UNSAT}}\;
      \nllabel{alg:line:TestInZEnd}
   }
   $P'$ := $\emptyset$\;   \nllabel{alg:line:ImpBegin}
   \For{\KwAll $c' \in P$}{
      $(x,y,d')$ := first element in $E(c')$\;
       \lIf{$\delta^\leftarrow_u(x) + d +\delta^\to_v(y) > d'$ \KwAnd
         $\delta^\leftarrow_u(-y) + d +\delta^\to_v(-x) > d'$ \KwAnd
         $\rho'(x) + \rho'(-y) > d'$}{$P'$ := $P' \cup \{c'\}$\;
         \nllabel{alg:line:ImpEnd}
}
   }
   \Return{($G'$, $\pi'$, $\rho'$, $P'$)}
\end{algorithm2e}

Algorithm~\ref{alg:InConCheAlgForUtvCon} shows the new algorithm. As
input it takes the constraint graph $G_\phi$, a
valid potential function $\pi$, the bounds function
$\rho$,
a set $P$ of UTVPI constraints to check for implication, as well as the
UTVPI constraint $c$ which should be added to $\phi$.

In the first step (line~\ref{alg:line:TransCons}) the constraint $c$ is
transformed to its corresponding edges $E(c)$ in a constraint graph.
Then each edge in $E(c)$ is added consecutively to
the constraint graph $G_\phi$ by using the \textsc{IncConDiff} algorithm of
Cotton and
Maler~\cite{Cotton:Maler:06}.
After inserting all edges in $G'$, the constraint graph
equals to $G_{\phi\cup \{c\}}$ and $\pi'$ is its valid potential
function for $G'$.
Hence $\phi\cup\{c\}$ is satisfiable in $\mathbb{Q}$.
The remainder of the algorithm maintains the bounds
function $\rho'$ (lines from~\ref{alg:line:MaintainRhoBegin}
to~\ref{alg:line:MaintainRhoEnd}) and it is used
to test the feasibility in
$\mathbb{Z}$ (lines~\ref{alg:line:TestInZBegin}
and~\ref{alg:line:TestInZEnd}), and the implication of
constraints in $P$ (lines~\ref{alg:line:ImpBegin} to~\ref{alg:line:ImpEnd}).

By Lemma~\ref{lemma:graph} to maintain $\rho$ we need to
see if the shortest path from $x$ to $-x$ has changed.
We only need to scan for new shortest paths using
the newly added edges.  We can restrict attention to a single added
edge $(u,v,d)$ since if there is
a path from $x$ over the edge $(u,v,d)$ to $-x$
($x^+ \rightsquigarrow u \stackrel{d}{\rightarrow} v
\rightsquigarrow x^-$)
then because of Lemma~\ref{prop:PathCounterPath} there is equal-weight
path from $x$ via the ``counter-edge'' $(-v,-u,d)$ to $-x$
($x^+ \equiv - x^- \rightsquigarrow -v \stackrel{d}{\rightarrow} -u
\rightsquigarrow - x^+ \equiv x^-$).

We calculate the shortest paths in $G_{\phi\cup
\{c\}}$ from each vertex $x$ to $u$ ($\delta^{\leftarrow}_u(x)$)
and from $v$ to each vertex $x$ ($\delta^{\rightarrow}_v(x)$)
(line~\ref{alg:line:ComputeShortestPath}).
The shortest path for
$\delta^{\leftarrow}_u$ can be computed like $\delta^{\rightarrow}_u$
by simply reversing the edges in the graph.

We can then calculate the shortest path from $x$
to $-x$ via the edge $u \stackrel{d}{\rightarrow} v$ using the
path $x^+ \rightsquigarrow u \stackrel{d}{\rightarrow} v
\rightsquigarrow x^-$ as
$\delta^\leftarrow_u(x) + d +\delta^\to_v(-x)$.
We update $\rho'$ if required (line~\ref{alg:line:MaintainRhoEnd}).

We can now check satisfiability of $\phi \cup \{c\}$
in $\mathbb{Z}$ using
Theorem~\ref{theorem:NegCycleDetectionByRho} (lines~\ref{alg:line:TestInZBegin}
and~\ref{alg:line:TestInZEnd}).
Finally we check implications using Theorem~\ref{theorem:imp}.

Using the above results, it is not difficult to show that
the algorithm is correct with the desired complexity bounds.

\begin{theorem}
   Algorithm~\ref{alg:InConCheAlgForUtvCon} (\textsc{ScSt}) is correct and runs
   in $\cO(n\log n + m + p)$ time and $\cO(n + m + p)$ space.
\end{theorem}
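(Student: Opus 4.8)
The plan is to split the argument into a correctness part and a complexity part, treating the four phases of Algorithm~\ref{alg:InConCheAlgForUtvCon} in turn. First I would establish the invariant maintained by the edge-insertion loop (lines~\ref{alg:line:TransCons}--\ref{alg:line:AddEdge}): appealing to the correctness of \textsc{IncConDiff}, after each successful call $G'$ equals the constraint graph of $\phi$ together with the edges inserted so far, and $\pi'$ is a valid potential function for it. Since $E(c)$ contains at most two edges, after the loop either UNSAT has been returned (a negative cycle, whence $\phi\cup\{c\}$ is unsatisfiable already in $\mathbb{Q}$ by Lemma~\ref{lemma:UnsatisInQ}, and \emph{a fortiori} in $\mathbb{Z}$) or $G' = G_{\phi\cup\{c\}}$ with valid potential $\pi'$, so $\phi\cup\{c\}$ is satisfiable in $\mathbb{Q}$ and the remaining phases apply.

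The heart of the correctness argument is the maintenance of $\rho'$ (lines~\ref{alg:line:MaintainRhoBegin}--\ref{alg:line:MaintainRhoEnd}). By Lemma~\ref{lemma:rho}, $\rho(x) = \lfloor wSP(x,-x)/2\rfloor$, so I must show that the assignment $\rho'(x)=\min\{\rho(x),\lfloor sp/2\rfloor\}$ recovers $\lfloor wSP_{new}(x,-x)/2\rfloor$ in $G_{\phi\cup\{c\}}$. The key observations are (i) adding edges can only shorten shortest paths, so $wSP_{new}(x,-x)=\min\{wSP_{old}(x,-x),\,q\}$ where $q$ is the weight of the best new $x\rightsquigarrow -x$ path, i.e.\ one using an added edge; and (ii) by Lemma~\ref{prop:PathCounterPath} a path through the counter-edge has an equal-weight counterpart through $(u,v,d)$, and since the counter of an $x\rightsquigarrow -x$ path is again an $x\rightsquigarrow -x$ path, it suffices to minimise over paths through the single edge $(u,v,d)$, giving $q=\min_x(\delta^\leftarrow_u(x)+d+\delta^\rightarrow_v(-x))=sp$. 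Because $\lfloor\cdot/2\rfloor$ is monotone this yields the claimed update. Here $\delta^\leftarrow_u$ and $\delta^\rightarrow_v$ are the true shortest-path functions of $G_{\phi\cup\{c\}}$, since $\pi'$ is a valid potential for $G'$ and Dijkstra on the reduced cost graph returns exact distances. With $\rho'$ correct, the $\mathbb{Z}$-feasibility test (lines~\ref{alg:line:TestInZBegin}--\ref{alg:line:TestInZEnd}) is exactly the criterion of Theorem~\ref{theorem:NegCycleDetectionByRho}.

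For the implication loop I would use Theorem~\ref{theorem:imp} together with the precondition $\phi\not\models c'$ for every $c'\in P$. Writing $(x,y,d')$ for the first edge of $E(c')$, the condition of Theorem~\ref{theorem:imp} must hold for both edges of $E(c')$, which are counter-edges; by Lemma~\ref{prop:PathCounterPath} the path condition on the second edge equals the $-y\rightsquigarrow -x$ condition on the first. Since in the old graph $wSP_{old}(x,y)>d'$ (as $c'$ was not implied), a new path can lower $wSP(x,y)$ below $d'$ only through an added edge, and as above the two new edges are captured by $\delta^\leftarrow_u(x)+d+\delta^\rightarrow_v(y)$ and $\delta^\leftarrow_u(-y)+d+\delta^\rightarrow_v(-x)$. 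Hence $c'$ remains unimplied exactly when these two quantities and $\rho'(x)+\rho'(-y)$ all exceed $d'$, which is precisely the test placing $c'$ in $P'$.

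Finally the complexity follows by summing the phases: the at-most-two \textsc{IncConDiff} calls cost $\cO(n\log n+m)$; the two Dijkstra runs on the reduced cost graph cost $\cO(n\log n+m)$; the two $\cO(n)$ vertex loops maintaining and testing $\rho'$ add nothing asymptotically; and the single pass over $P$ costs $\cO(p)$, giving $\cO(n\log n+m+p)$ time overall. The graph and potentials occupy $\cO(n+m)$ space and $P,P'$ occupy $\cO(p)$, for $\cO(n+m+p)$ space. I expect the main obstacle to be rigorously justifying the single-edge (counter-edge) reduction in the $\rho'$ and implication updates -- in particular that it is sound to fold only the paths through $(u,v,d)$ into the old values, which rests on the monotonicity of shortest paths under edge addition and the self-duality under the counter operation supplied by Lemma~\ref{prop:PathCounterPath}.
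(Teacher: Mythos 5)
Your proof is correct and follows essentially the same route as the paper's own proof: correctness of the \textsc{IncConDiff} phase, maintenance of $\rho'$ by minimising over paths through the single added edge $(u,v,d)$ (justified via Lemma~\ref{prop:PathCounterPath}), the $\mathbb{Z}$-feasibility test via Theorem~\ref{theorem:NegCycleDetectionByRho}, the implication test via Theorem~\ref{theorem:imp}, and the same phase-by-phase complexity count. If anything, your justification of the single-added-edge reduction (monotonicity of shortest paths under edge addition combined with the counter-path symmetry) is spelled out more rigorously than in the paper, which relegates that point to a brief remark.
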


\begin{proof}
   The algorithm is correct if it returns UNSAT in the case of unsatisfiability
   of $\phi\cup\{c\}$ or the constraint graph $G_{\phi\cup\{c\}}$, its
   valid potential function $\pi'$, its bounds function
   $\rho'$ and the set of constraints $P'\subseteq P$ not implied by
   $\phi\cup\{c\}$.

   The Lemma~\ref{lemma:UnsatisInQ} and the Algorithm~\textsc{IncConDiff}
   (see Cotton and Maler~\cite{Cotton:Maler:06}) guarantee that after
   termination of \textsc{IncConDiff} $G'=G_{\phi\cup\{c\}}$ and $\pi'$ is
   its valid potential function if $\phi\cup\{c\}$ is satisfiable in
   $\mathbb{Q}$; otherwise $\phi\cup\{c\}$ is unsatisfiable and the
   algorithm returns UNSAT.

   After application of \textsc{IncConDiff} the algorithm maintains the bounds
   function (lines \ref{alg:line:MaintainRhoBegin} to
   \ref{alg:line:MaintainRhoEnd}) by calculation of the shortest path
   $x\rightsquigarrow u\rightarrow v\rightsquigarrow -x$ via one added edge
   $(u,v,d)\in E(c)$ for each node $x$ in $G_{\phi\cup\{c\}}$. Remark: we only
   have to considered the shortest paths via the added edges $\rho$ give us the
   length of a shortest path without those added edges.
   Due to Theorem~\ref{theorem:NegCycleDetectionByRho} the algorithm checks
   $\phi\cup\{c\}$ for
   unsatisfiability in $\mathbb{Z}$ in the next two lines. If it is
   unsatisfiable \textsc{ScSt} terminates and returns UNSAT.

   The remainder of the algorithm computes the set of non-implied constraints
   $P'\subseteq P$ by testing for all constraints $c'\in P$ if the length of
   both
   paths
   $x\rightsquigarrow u\to v\rightsquigarrow y$,
   $-y\rightsquigarrow u\to v\rightsquigarrow -x$ are longer than $d'$ and the
   sum of the upper bounds $\rho'(x)+\rho'(-y)$ is greater than $d'$ where
   $(x,y,d')\in E(c')$. If all three cases hold then $c'$ is not implied by
   $\phi\cup\{c\}$ thanks to Theorem \ref{theorem:imp}.

   The run-time is determine by the run-time of \textsc{IncConDiff}, the
   calculation of $\delta^\leftarrow_u$, $\delta^\rightarrow_v$ which are
   $\cO(n\log n + m)$, and the implication check $\cO(p)$. All the other
   computations can be done in constant or
   linear time with respect to $n$ and $m$. So the overall run-time is
   $\cO(n\log n + m +p)$.
   \hfill $\Box$
\end{proof}

The cost of \textsc{IncConDiff} and the shortest path computations
are each $\cO(n\log n + m)$, while the implication checking is $\cO(p)$.
The space required simply stores the graph and implication constraints.

\section{Experimental Results}

We present empirical comparisons of the algorithms discussed herein,
first on satisfaction and then on implication questions.

For both experiments we generate 60 UTVPI instances $\phi$ in each problem class
with the following specifications:
the values $d$ range uniformly in from $-15$ to 100.
approximately 10\% are negative,
each variable appears in at least one UTVPI constraint,
each constraint involves exactly two variables,
and there is at most one constraint between any two variables two variables are
allowed.

\setlength{\tabcolsep}{0.5\tabcolsep}
\begin{table}[tb]
   \centering
   \caption{Average run-time in seconds of the satisfiability algorithms}
   \label{tab:sat}
   \begin{tabular}{|l|l|c|c|c|c|}\hline
      \multicolumn{2}{|c|}{examples} & {\small \textsc{IncLaMu}} & {\small
      \textsc{ScSt}} &
      {\small $m$\textsc{LaMu}} & {\small \textsc{HaSt}} \\ \hline
      $n=100$ & feasible & 0.32 & 0.85 & 1.06 & 2.17\\
      $m=1000$ & Z-inf. & 0.21 & 0.59 & 0.59 & 1.98\\
      $d=5\%$ & Q-inf. & 0.10 & 0.25 & 0.27 & 1.12\\ \hline
      \multicolumn{2}{|c|}{all (32, 8, 20)} & 0.23 & 0.62 & 0.74 & 1.79\\
      \hline \hline

      $n=100$ & feasible & 1.10 & 2.48 & 3.96 & 2.76\\
      $m=2000$ & Z-inf. & 0.41 & 1.06 & 1.30 & 2.36\\
      $d=10\%$ & Q-inf. & 0.08 & 0.20 & 0.22 & 0.94\\ \hline
      \multicolumn{2}{|c|}{all (31, 9, 20)} & 0.66 & 1.50 & 2.32 & 2.09\\
      \hline \hline

      $n=100$ & feasible & 4.02 & 7.35 & 12.62 & 3.22\\
      $m=4000$ & Z-inf. & 0.40 & 1.06 & 1.21 & 2.54\\
      $d=20\%$ & Q-inf. & 0.09 & 0.24 & 0.26 & 1.10\\ \hline
      \multicolumn{2}{|c|}{all (28, 12, 20)} & 1.98 & 3.72 & 6.2 & 2.37\\
      \hline \hline

      $n=200$ & feasible & 4.42 & 10.59 & 17.47 & 22.94\\
      $m=4000$ & Z-inf. & 1.08 & 3.15 & 3.30 & 18.29\\
      $d=5\%$ & Q-inf. & 0.34 & 0.88 & 0.95 & 8.70\\ \hline
      \multicolumn{2}{|c|}{all (30, 11, 19)} & 2.51 & 6.15 & 9.64 & 17.42\\
      \hline \hline

      $n=200$ & feasible & 16.22 & 31.10 & 55.75 & 26.30\\
      $m=8000$ & Z-inf. & 1.36 & 3.94 & 4.20 & 20.42\\
      $d=10\%$ & Q-inf. & 0.28 & 0.72 & 0.82 & 6.60\\ \hline
      \multicolumn{2}{|c|}{all (29, 11, 20)} & 8.18 & 16.00 & 27.99 & 18.66\\
      \hline \hline

      $n=200$ & feasible & 61.69 & 98.71 & 196.82 & 28.52\\
      $m=16000$ & Z-inf. & 1.86 & 5.20 & 5.82 & 24.94\\
      $d=20\%$ & Q-inf. & 0.31 & 0.79 & 0.88 & 7.10\\ \hline
      \multicolumn{2}{|c|}{all (28, 12, 20)} & 29.26 & 47.37 & 93.03 & 20.67\\
      \hline

   \end{tabular}
\end{table}
\setlength{\tabcolsep}{2.0\tabcolsep}

In addition, for the implication benchmarks 10 implication sets $P$ of size $p$
were
created for each $n$ using the same restrictions as defined above.
On average over all benchmarks, 65\% of the constraint $P$ were implied by
the corresponding $\phi$.

The experiments were run on a Sun Fire T2000 running SunOS 5.10 and a 1 GHz
processor. The code was written in C and compiled with gcc 3.2.

We run incremental satisfiability on a system of $m$ constraints
in $n$ variables, adding the constraints one at a time.
We compare: \textsc{IncLaMu} the incrementalization of \textsc{LaMu}
presented in Section~\ref{sec:incsat},
\textsc{ScSt} the incremental implication checking algorithm
of Section~\ref{sec:incimpl} where $p = 0$,
$m$\textsc{LaMu} running \textsc{LaMu} $m$ times for
$m$ satisfaction checks, and
\textsc{HaSt} the algorithm of~\cite{Harvey:Stuckey:97}.
The results are shown in Table~\ref{tab:sat}, where $d$ represent the density of
a UTVPI instance.  We split the
examples into cases that are feasible, $\mathbb{Z}$
infeasible, and  $\mathbb{Q}$ infeasible.
Moreover, the table entry ``all'' shows the overall average run-time and the
number of examples for each case in the same ordering as above written.
Interestingly for dense satisfiable systems \textsc{HaSt}
is best, but overall \textsc{IncLaMu} is the clear winner.

\setlength{\tabcolsep}{2.0\tabcolsep}
\begin{table}[tb]
   \centering
   \caption{Average run-time in seconds of the implication algorithms}
   \label{tab:imp}
\begin{threeparttable}
   \begin{tabular}{|l|l|c|c|}\hline
      \multicolumn{2}{|c|}{examples} & {\small \textsc{ScSt}} &
      {\small \textsc{HaSt}} \\ \hline
      $n=100$ & $p=50$ & 0.62 & 1.81\\
      $m=1000$ & $p=100$ & 0.63 & 1.82\\
      $d=5\%$ & $p=200$ & 0.65 & 1.84\\ \hline

      $n=200$ & $p=100$ & 6.18 & 17.52\\
      $m=4000$ & $p=200$ & 6.24 & 17.56\\
      $d=5\%$ & $p=400$ & 6.36 & 17.66\\ \hline

      $n=800$ & $p=400$ & 14.20\tnote{*} & 521.6\tnote{*}\\
      $m=12800$ & $p=800$ & 14.67\tnote{*} & 522.2\tnote{*}\\
      $d=1\%$ & $p=1600$ & 15.60\tnote{*} & 523.3\tnote{*}\\ \hline
   \end{tabular}
    \begin{tablenotes}\footnotesize
       \item[*] Average run-time of $\mathbb{Q}$ infeasible problems.
    \end{tablenotes}
\end{threeparttable}
\end{table}
\setlength{\tabcolsep}{0.5\tabcolsep}

The incremental implication checked satisfiability and the
implications of constraints $P$ incrementally as each of
the $m$ constraints were added one at a time.
A run was terminated if there were no more constraints to add,
all constraints in $P$ were implied, or unsatisfiability was detected.
We compare the two algorithms that can check implication: \textsc{ScSt}
versus \textsc{HaSt}.
Table~\ref{tab:imp} shows the results. Overall the checks for implication
are cheap compared to the satisfiability check for each algorithm.
Hence the results are similar to the satisfiability case.
Again $\textsc{HaSt}$ is superior for dense systems, while
$\textsc{ScSt}$ is the clear winner on sparse systems.

\section{Conclusion}

We have presented new incremental algorithms for UTVPI constraint
satisfaction and implication checking which improve upon the previous asymptotic
complexity, and perform better in practice for sparse constraint systems.

We can easily adapt the algorithms herein to provide non-incremental
implication checking in  $\cO(n^2 \log n + nm + p)$ time
and  $\cO(n + m  +  p)$ space,
and generate all implied constraints in $\cO(n^2\log n + nm)$
time and  $\cO(n + m  +  p)$ space, where $p$ is the number
of implied constraints generated.

\bibliography{quellen}
\bibliographystyle{abbrv}
\end{document}